%
\documentclass[runningheads]{llncs}

\usepackage[utf8]{inputenc} 
\usepackage[T1]{fontenc} 

\usepackage{amsmath}
\usepackage{amssymb}
\usepackage{mathtools}
\usepackage{amsfonts} 

\usepackage{mdframed}
\usepackage[most]{tcolorbox}
\usepackage{nicefrac} 
\usepackage{microtype} 
\usepackage{lipsum}

\usepackage{hyperref}
\usepackage{url} 
\usepackage{xspace}

\usepackage{graphicx}
\usepackage{xcolor}
\graphicspath{{media/}}
\usepackage{booktabs} 
\usepackage{multirow}
\usepackage[ruled,linesnumbered]{algorithm2e}

\usepackage{todonotes}
\usepackage{fancyhdr} 
\usepackage{lineno}


%

\setlength{\parskip}{1.2mm}
\setlength{\parindent}{0pt}


\newcommand{\xp}{{\sf XP}\xspace}
\newcommand{\fpt}{{\sf FPT}\xspace}
\newcommand{\npp}{\textup{\textsf{P}}\xspace}
\newcommand{\np}{\textup{\textsf{NP}}\xspace}

\newcommand{\ptas}{{\sf PTAS}\xspace}
\newcommand{\nph}{{\textsf{NP}\textrm{-hard}\xspace}}
\newcommand{\npc}{{\textsf{NP}\textrm{-complete}\xspace}}
\newcommand{\npcs}{{\textsf{NP}\textrm{-completeness}\xspace}}

\newcommand{\wth}{\textsf{W[2]}\textrm{-hard}\xspace}

\newcommand{\wtot}{\textsf{W[1]}\xspace}

\newcommand{\lapxh}{\textrm{log-APX-hard}\xspace}
\newcommand{\lapxxh}{\textrm{log-APX-hardness}\xspace}

\newcommand{\apxh}{\textrm{APX-hard}\xspace}

\newcommand{\mcs}{\textup{\textsc{MCS}}\xspace}
\newcommand{\mscs}{\textup{\textsc{MSCS}}\xspace}
\newcommand{\mcss}{\textup{\textsc{MCSS}}\xspace}
\newcommand{\mss}{\textup{\textsc{MSS}}\xspace}

\newcommand{\udg}{\textup{\textsc{UDG}}\xspace}
\newcommand{\mds}{\textup{\textsc{MDS}}\xspace}
\newcommand{\maxsat}{\textup{\textsc{MAX-3SAT(8)}}\xspace}

\newcommand{\NN}{\mathrm{\hat{N}}\xspace}
\newcommand{\n}{\mathrm{N}\xspace}
\newcommand{\dist}{\mathrm{d}\xspace}



\begin{document}
\title{Minimum Selective Subset on Unit Disk Graphs and Circle Graphs}
%
%
\author{%
Bubai Manna
}%
\institute{
Indian Institute of Technology Kharagpur, India\\
\email{bubaimanna11@gmail.com}}

\authorrunning{B. Manna}
\maketitle              
\begin{abstract}
In a connected simple graph \( G = (V(G),E(G)) \), each vertex is assigned one of \( c \) colors, where \( V(G) = \bigcup_{\ell=1}^{c} V_{\ell} \) and \( V_{\ell} \) denotes the set of vertices of color \(\ell\). A subset \( S \subseteq V(G) \) is called a \emph{selective subset} if, for every \( \ell \), \( 1 \leq \ell \leq c \), every vertex \( v \in V_{\ell} \) has at least one nearest neighbor in 
\( S \cup (V(G) \setminus V_{\ell}) \) that also lies in \( V_{\ell} \). 
The \emph{Minimum Selective Subset} (\mss) problem asks for a selective subset of minimum size.

We show that the \mss problem is \lapxh on general graphs, even when \( c = 2 \). As a consequence, the problem does not admit a polynomial-time approximation scheme (\ptas) unless \(\npp = \np\). On the positive side, we present a \ptas for unit disk graphs that does not require a geometric representation and applies for arbitrary $c$. We further prove that \mss remains \npc\ in unit disk graphs for arbitrary \(c\). In addition, we show that the \mss problem is \apxh on circle graphs, even when $c=2$ \footnote{This work appeared in CALDAM 2026 \cite{caldam2026}}.

\keywords{Nearest-Neighbor Classification \and Minimum Consistent Subset \and Minimum Selective Subset \and Unit Disk Graphs \and Circle Graphs \and NP-complete \and log-APX-hard \and Polynomial-time Approximation Scheme}
\end{abstract}
\section{Introduction}
Many computational tools have been developed for supervised learning methods on a labeled training set \( T \) embedded in a metric space \( (X,d) \). Each data point \( t \in T \) is associated with a label (also called a \emph{character} or \emph{color}), chosen from a set \( C = \{1,2,\dots,c\} \). The objective is to extract a smallest possible subset \( S \subseteq T \) such that every point in \( T \) either belongs to \( S \) or has at least one nearest neighbor (with respect to the metric $d$) within \( S \) that shares the same character. This optimization problem, called the \emph{Minimum Consistent Subset} (\mcs), was originally formulated by Hart~\textcolor{blue}{\cite{Hart}} in 1968, which has received thousands of citations, highlighting its significant impact in the field. However, the paper~\textcolor{blue}{\cite{Hart}} did not establish any complexity results or algorithms.

Later in 1991, Wilfong~\textcolor{blue}{\cite{Wilfong}} defined two problems \mcs and \mss together and proved that the \mcs and \mss problems are $\npc$ in \(\mathbb{R}^2\) for $c\geq 3$ and $c\geq 2$, respectively. It also proposed a polynomial-time algorithm when there is only one red point and all other points are blue in $\mathbb{R}^2$. Later, in 2018, it was proved that \mcs remains $\npc$ when $c=2$ in \(\mathbb{R}^2\)~\textcolor{blue}{\cite{Khodamoradi}}. Recently, Banerjee et al.~\textcolor{blue}{\cite{BBC}} showed that \mcs is \wth (for arbitrary $c$) and \mss is \wtot (for $c=2$), both parameterized by the solution size. Various algorithms, including those for many restricted inputs for the \mcs problem in \(\mathbb{R}^2\) have been proposed~\textcolor{blue}{\cite{BBC,Ahmad,Chitnis22}}, highlighting its significance in machine learning and computational geometry. The only algorithm for the \mss problem is a \ptas, which was established when $c=2$~\textcolor{blue}{\cite{BBC}}.

The Minimum Selective Subset (\mss) problem plays a crucial role in optimizing data selection by identifying the smallest subset which preserves essential information. It can be viewed both as a clustering and a proximity problem. This is particularly useful in applications such as fingerprint recognition, character recognition, and pattern recognition, where it helps reduce redundancy and improve decision-making in classification and feature selection tasks. \mss was introduced because the standard \mcs methods, like Hart's~\textcolor{blue}{\cite{Hart}}, could not guarantee that the resulting subset was the smallest possible size. \mss applies a stronger consistency condition on the subset to create a mathematical framework that is amenable to an exact minimization procedure, thereby fulfilling the optimization goal that \mcs heuristics failed to satisfy. So far, we have discussed \mcs and \mss problems along with their published results in $\mathbb{R}^2$. We now turn to these problems in the context of graph algorithms.

Banerjee et al.~\textcolor{blue}{\cite{BBC}} proved that \mcs is \wth~\textcolor{blue}{\cite{book}} when parameterized by the solution size, even with only two colors on general graphs. Dey et al.~\cite{DeyMN21,DeyMN23} provided polynomial-time algorithms for \mcs on some simple graph classes including paths, spiders, caterpillars, combs, and trees (for trees, $c=2$). \xp, $\npc$, and \fpt (when $c$ is a parameter) results on trees, can be found in~\textcolor{blue}{\cite{Arimura23,aritra}}. The \mcs problem is also $\npc$ on interval graphs~\textcolor{blue}{\cite{aritra}} and $\apxh$ on circle graphs~\textcolor{blue}{\cite{2024arXiv240514493M}}. Variants, such as the \textit{Minimum Consistent Spanning Subset} (\mcss) and the \textit{Minimum Strict Consistent Subset} (\mscs) of \mcs, have been studied on trees~\textcolor{blue}{\cite{banik2024minimum,biniaz2024minimum,manna2024minimumstrictconsistentsubset}}. However, the algorithmic results for \mss have not been extensively studied to date. Banerjee et al.~\textcolor{blue}{\cite{BBC}} only showed that \mss is $\npc$ on general graphs. Very recently, the \mss problem has been studied in various settings, including $\mathcal{O}(\log n)$-approximation algorithms for general graphs, $\npc$ results for planar graphs, and linear-time algorithms for trees and unit interval graphs~\textcolor{blue}{\cite{manna2025minimumselectivesubsetgraph}} (published in CCCG 2025).

\textbf{Our Contributions.} The \mss problem admits an $\mathcal{O}(\log n)$-approximation on general graphs, which raises the question of whether better approximations exist. We show in Section~\textcolor{blue}{\ref{lapxhard}} that \mss is \lapxh even when $c=2$. Hence, the problem is also \apxh and does not admit a \ptas on general graphs. This leads to the natural question of whether some graph classes allow a \ptas. To date, none are known. We answer this by proving in Section~\textcolor{blue}{\ref{ptas}} that \mss admits a \ptas on unit disk graphs for arbitrary $c$, without requiring a geometric representation. Unit disk graphs are also fundamental in wireless networks, robotics, and computational geometry, where efficient approximation algorithms are highly relevant~\textcolor{blue}{\cite{GOLDIN2009234}}. Before presenting our \ptas, we establish in Section~\textcolor{blue}{\ref{nphard}} that \mss remains $\npc$ on unit disk graphs when $c$ is arbitrary. We also investigate whether \mss is \apxh in other graph classes. In Section~\textcolor{blue}{\ref{apxhard}}, we prove that \mss is \apxh on circle graphs even when $c=2$. Circle graphs, which model intersecting chords, have applications in VLSI design, scheduling, and bioinformatics~\textcolor{blue}{\cite{article,sherwani2012algorithms}}. All proofs of results marked with $(^{\ast})$ can be found in the Appendix. 

\section{Preliminaries}\label{preli}
Let $G=(V(G),E(G))$ be a graph, where $V(G)$ is the vertex set and $E(G)$ is the edge set. For any $U\subseteq V(G)$, \( G[U] \) denotes the subgraph of \( G \) induced on \( U \), and \( \lvert U \rvert \) is the cardinality of $U$. We denote \( [n] \) as the set of integers \( \{1,\ldots, n\} \). We use an arbitrary vertex color function \( C:V(G)\rightarrow [c] \), which assigns each vertex exactly one color from the set \( [c] \). For a subset of vertices \( U \subseteq V(G)\), let \( C(U) \) represent the set of colors of the vertices in \( U \), formally defined as \( C(U) = \{C(u) \mid u \in U\} \). The shortest path distance (i.e., \textit{hop-distance}) between two vertices \( u \) and \( v \) in \( G \) is denoted by \( \dist (u,v) \). Distance between \( v\in V(G) \) and the set \( U\subseteq V(G) \) is given by  $\dist(v,U) = \min_{u\in U} \dist (v,u)$. Similarly, the distance between two subgraphs \( G_1 \) and \( G_2 \) in \( G \) is defined as $\dist (G_1,G_2) = \min \{\dist (v_1,v_2) \mid v_1 \in V(G_1), v_2 \in V(G_2)\}$. The set of nearest neighbors of \( v \) in the set \( U \) is denoted as \( \NN(v,U) \), formally defined as $\NN(v,U) = \{u \in U \mid \dist (v,u) = \dist (v,U) \}$. Therefore, if $v\in U$, then $\NN (v, U)=\{v\}$. The set of vertices in $U$ adjacent to \( v \) is given by  
\(
\n (v,U) = \{ u \in U \mid (u,v) \in E(G) \}.
\)  
We also define  
\(
\n[v,U] = \{ v \} \cup \n(v,U).
\) For any two subsets \( U_1, U_2 \subseteq V(G) \), we define  
\(
\n(U_1,U_2) = \bigcup_{v\in U_1} \n(v, U_2),
\)  
and  
\(
\n[U_1,U_2] = \bigcup_{v\in U_1} \n[v, U_2].
\)  
Most symbols and notations follow standard conventions from~\textcolor{blue}{\cite{diestel2012graph}}. Suppose \( G = (V(G), E(G)) \) is a given simple connected undirected graph where $\bigcup_{i=1}^{c} V_i = V(G)$ and $V_i \cap V_j = \emptyset$ for $i\neq j$ and each vertex in \( V_i \) is assigned color \( i \). A \emph{Minimum Consistent Subset} (\mcs) is a subset \( S \subseteq V(G) \) of minimum cardinality such that for every vertex \( v \in V(G) \), if \( v \in V_i \), then $\NN(v,S) \cap V_i$ is non-empty.  
 \begin{definition}[Selective Subset]
A subset \( S \subseteq V(G) \) is called a \emph{Selective Subset (\mss)} if, for each vertex \( v \in V(G) \), if \( v \in V_i \), the set of nearest neighbors of \( v \) in \( S \cup (V(G) \setminus V_i) \), denoted as $\NN(v, S \cup (V(G) \setminus V_i))$, contains at least one vertex \( u \) such that \( C(v) = C(u) \). An \mss is a selective subset of minimum cardinality. The decision version of the \mss problem is as follows:
\begin{tcolorbox}[
    enhanced,
    title={\color{black} \sc{Decision Version of Selective Subset Problem on Graphs}},
    colback=white,
    boxrule=0.4pt,
    attach boxed title to top center={xshift=0cm, yshift*=-2mm},
    boxed title style={size=small, frame hidden, colback=white}]
    \textbf{Input:} A graph \( G=(V(G),E(G)) \), a coloring function \( C:V(G)\rightarrow [c] \), and an integer \( s \).\\
    \textbf{Question:} Does there exist a selective subset of size at most \( s \) for \( (G,C) \)?
\end{tcolorbox}
In other words, we seek a vertex set \( S \subseteq V(G) \) of minimum cardinality such that every vertex \( v \) has at least one nearest neighbor of the same color in the graph, excluding vertices of the same color as \( v \) that are not in \( S \). If all the vertices of a graph \( G \) are of the same color (i.e., \( G \) is monochromatic), then any vertex in the graph forms a valid \mss. Figure~\textcolor{blue}{\ref{bubai1}} illustrates an example of \mss. 
\end{definition}
\begin{figure}[ht]
\includegraphics[width=8.5cm]{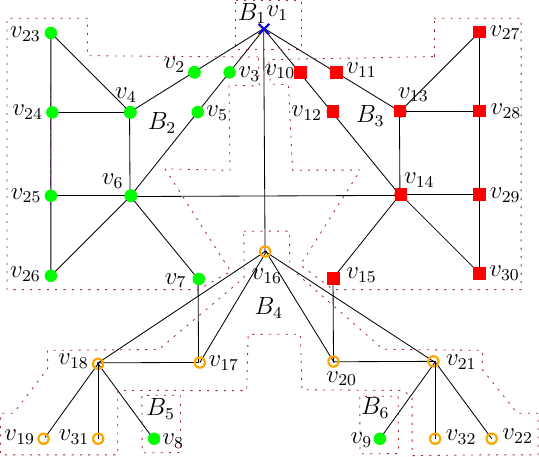}
\centering
\caption{Colors: \emph{blue $\equiv$ cross}, \emph{green $\equiv$ disk}, \emph{red $\equiv$ square}, and \emph{orange $\equiv$ circle}. $V(G)=$ $V_{blue}$ $\cup $ $ V_{green}$ $\cup$ $ V_{red}\cup V_{orange}$, where $V_{blue}=$ $\{v_1\}$, $V_{green}$ $=$ $\{v_2$ $,\dots,$ $v_{9},v_{23},$ $\dots,v_{26}\}$, $V_{red}=\{v_{10},\dots,v_{15},v_{27},\dots,v_{30}\}$, and $V_{orange}=\{v_{16},\dots,v_{22},v_{31},v_{32}\}$. $S=$ $\{v_1,$ $v_2,$ $v_3,$ $v_7,$ $v_8,$ $v_{9},v_{10},v_{11},v_{15},v_{16}\}$ is an \mss, and $S=\{v_1,$$v_4,$ $v_5,$ $v_7,v_8,v_{9},v_{12},v_{13},v_{15},v_{16}\}$ is also an \mss. Brown-dotted regions indicate the blocks. The complete list of blocks is $B_1=\{v_1\}$, $B_2=\{v_2,\dots,v_7,v_{23},\dots,v_{26}\}$, $B_3=\{v_{10},\dots,v_{15},v_{27},\dots,v_{30}\}$, $B_4=\{v_{16},\dots,v_{22},v_{31},v_{32}\}$, $B_{5}=\{v_{8}\}$, $B_{6}=\{v_{9}\}$. $B_{2,1}=\{v_2,v_3,v_6,v_7\}$, $B_{2,2}=\{v_4,v_5,v_{25},v_{26}\}$. $\{\{v_2\}, \{v_3\},\{v_7\}\}$ is a collection of 2-distance sets in $B_{2,1}$.}\label{bubai1}
\end{figure}
\begin{definition}[Block]
A block is a maximal connected subgraph whose vertices all have the same color (i.e., a maximal connected monochromatic subgraph).
\end{definition}
Figure~\textcolor{blue}{\ref{bubai1}} illustrates an example of the blocks.
Suppose $B_1,\dots,B_k$ is the complete list of blocks in $G$. We assume that $\lvert V(G)\rvert =n$, so that $k\leq n$. We form the sets $B_{i}^1$, $B_{i}^2$, $B_{i,3}$ for each $i=1,\dots,k$ as follows (see Figure~\textcolor{blue}{\ref{bubai1}}):
\begin{itemize}
    \item Initially, $B_{i,1}:=\emptyset $, $B_{i,2}:=\emptyset$.
    \item For each vertex $v\in B_i$, if there exists a vertex $u\in \n (v,V(G))$ such that $C(u)\neq C(v)$, then $v\in B_{i,1}$.
    \item For any vertex $v\in B_i\setminus B_{i,1}$ if $\dist (v,B_{i,1})=1$, then $v\in B_{i,2}$.
    \item We denote $B_{i,3}=B_{i,1}\cup B_{i,2}$.
\end{itemize}
\begin{lemma}\label{lemma2}$^{\ast}$
For any vertex $v\in B_{i,1}$ and a selective subset $S$, we have $\n [v,B_{i,3}]\cap S\neq \emptyset$ for $1\leq i\leq k$.
\end{lemma}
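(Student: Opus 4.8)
The plan is to apply the selective-subset condition directly at the vertex $v$ itself, using that membership $v\in B_{i,1}$ forces $v$ to have a differently-coloured neighbour and hence to lie at distance exactly $1$ from $V(G)\setminus V_\ell$, where $\ell:=C(v)$ denotes the colour of the block $B_i$.

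First I would fix notation and dispose of the trivial case. Since $v\in B_i\subseteq V_\ell$ and $v\in B_{i,1}$, by definition there is a vertex $u\in\n(v,V(G))$ with $C(u)\neq\ell$, so $u\in V(G)\setminus V_\ell$ and $\dist(v,V(G)\setminus V_\ell)=1$. If $v\in S$, then $v\in\n[v,B_{i,3}]\cap S$ because $v\in B_{i,1}\subseteq B_{i,3}$, and we are done; so from now on assume $v\notin S$.

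The core of the argument is then short. As $v\notin S$ and $v\in V_\ell$, we have $v\notin S\cup(V(G)\setminus V_\ell)$, hence $\dist\bigl(v,S\cup(V(G)\setminus V_\ell)\bigr)\ge 1$; combined with $\dist(v,V(G)\setminus V_\ell)=1$ this gives $\dist\bigl(v,S\cup(V(G)\setminus V_\ell)\bigr)=1$, so every element of $\NN\bigl(v,S\cup(V(G)\setminus V_\ell)\bigr)$ is an actual neighbour of $v$. The selective property at $v$ now supplies a vertex $w\in\NN\bigl(v,S\cup(V(G)\setminus V_\ell)\bigr)$ with $C(w)=\ell$; since a colour-$\ell$ vertex cannot lie in $V(G)\setminus V_\ell$, this forces $w\in S$, and $w$ is adjacent to $v$. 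It remains to check $w\in B_{i,3}$: because $w$ is adjacent to $v$, $C(w)=C(v)=\ell$, and $v\in B_i$, the maximality of the monochromatic connected subgraph $B_i$ yields $w\in B_i$; and then either $w\in B_{i,1}$, or $w\in B_i\setminus B_{i,1}$ with $\dist(w,B_{i,1})=1$ (witnessed by the edge to $v\in B_{i,1}$), so $w\in B_{i,2}$. In either case $w\in B_{i,3}$, hence $w\in\n[v,B_{i,3}]\cap S$.

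I do not anticipate a genuine obstacle here; the one point that needs care is the step where the selective condition yields a same-coloured nearest neighbour and one must observe that such a vertex necessarily belongs to $S$ rather than merely to $V(G)\setminus V_\ell$ — this is precisely what converts the selective property into a statement about membership in $S$. Everything else is an unwinding of the definitions of $B_{i,1}$, $B_{i,2}$, $B_{i,3}$ together with the maximality of blocks.
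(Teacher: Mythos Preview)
Your proof is correct and follows essentially the same approach as the paper's: both exploit that $v\in B_{i,1}$ has a differently-coloured neighbour, so the selective condition at $v$ forces a same-coloured vertex of $S$ at distance $1$. The paper phrases this as a proof by contradiction and is terser (it does not spell out why a same-coloured neighbour of $v$ must lie in $B_{i,3}$), whereas you argue directly and explicitly verify the $B_{i,3}$-membership via the $B_{i,1}/B_{i,2}$ case split; your version is more detailed but the underlying idea is identical.
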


\section{\lapxxh of \mss on General Graphs}\label{lapxhard}
We establish a reduction from the \textsc{Minimum Dominating Set} 
(\mds) problem to the \mss problem. In the \textsc{Minimum Dominating Set} problem, the input is a graph \( G \) together with an integer \( s \). The task is to decide whether there exists a subset \( D \subseteq V(G) \) of size at most \( s \) such that every vertex \( u \in V(G) \) satisfies \( \n [u, V(G)] \cap D \neq \emptyset \). It is well known that the \textsc{Minimum Set Cover} problem is \lapxh (see~\textcolor{blue}{\cite{10.5555/1965254}} for complexity class definitions), and moreover, it is $\nph$ to approximate it within a factor of \( \delta \cdot \log n \) for some positive constant \( \delta \)~\textcolor{blue}{\cite{raz97}}. Since there exists an \( L \)-reduction from the \textsc{Minimum Set Cover} problem to the \textsc{Minimum Dominating Set} problem, the \textsc{Minimum Dominating Set} problem is also \lapxh.

Let \( (G, s) \) be an arbitrary instance of the \textsc{Minimum Dominating Set} problem. We construct an instance \( (G', C, s+1) \) for the \mss problem as follows (see Figure~\textcolor{blue}{\ref{bubaix}(a)}). Define the new graph \( G' \) with \( V(G') = V(G) \cup \{z\} \) and \( E(G') = E(G) \cup \{(z, u) \mid u \in V(G)\} \). The color function \( C \) assigns color 1 to all vertices \( u \in V(G) \), and color 2 to the additional vertex \( z \).
\begin{lemma}\label{bubailemma6}$^{\ast}$
$G$ has a dominating set of size at most $s$ if and only if $G'$ has a selective subset of size at most $s+1$.
\end{lemma}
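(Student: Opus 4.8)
The plan is to prove the lemma by giving an exact characterization of the selective subsets of $G'$: they are precisely the sets of the form $D \cup \{z\}$ where $D$ is a dominating set of $G$. Since this characterization is size-preserving up to the additive constant $1$ (indeed $\mathrm{opt}_{\mss}(G',C) = \mathrm{opt}_{\mds}(G) + 1$), it is in particular a strict/$L$-reduction, but for the lemma itself only the two implications are needed.

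First I would isolate the role of the vertex $z$. Since $z$ is the unique vertex of color $2$, we have $V(G') \setminus V_2 = V(G)$, and because $z$ is adjacent to every vertex of $V(G)$ we get $\dist(z,V(G)) = 1$. Hence $\NN\!\bigl(z, S \cup (V(G')\setminus V_2)\bigr) = \NN(z, S \cup V(G))$ equals $\{z\}$ (the distance-$0$ nearest neighbour) when $z \in S$, and equals all of $V(G)$ — entirely color $1$ — when $z \notin S$. So the selective condition at $z$ holds if and only if $z \in S$.

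Next, assuming $z \in S$, I would analyze the selective condition at an arbitrary $v \in V(G)$. Here $V(G') \setminus V_1 = \{z\}$, and since $z \in S$ we have $S \cup \{z\} = S$; moreover $z \in S$ is adjacent to $v$, so $\dist(v,S) \le 1$, with equality precisely when $v \notin S$. If $v \in S$, then $\NN(v,S) = \{v\}$ has color $1$, so the condition holds automatically. If $v \notin S$, then $\dist(v,S)=1$ and $\NN(v,S) = \n(v,S)$, which contains a color-$1$ vertex if and only if some $u \in S \cap V(G)$ is adjacent to $v$ in $G'$, equivalently in $G$. Thus the selective condition holds at every vertex of $V(G)$ if and only if $S \cap V(G)$ is a dominating set of $G$ (each $v$ is either in $S \cap V(G)$ or $G$-adjacent to a vertex of $S \cap V(G)$).

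Combining the two paragraphs: $S$ is a selective subset of $G'$ iff $z \in S$ and $S \setminus \{z\}$ is a dominating set of $G$. The forward direction of the lemma follows by taking $S = D \cup \{z\}$ for a dominating set $D$ of $G$ with $|D| \le s$; the backward direction follows by taking $D = S \setminus \{z\}$ for a selective subset $S$ of $G'$ with $|S| \le s+1$. I do not expect a real obstacle here; the only subtleties worth spelling out are that once $z \in S$, it is itself the witness forcing $\dist(v,S) = 1$ whenever $v \notin S$ (so no color-$1$ vertex can ever be strictly closer to $v$ than the color-$2$ vertex $z$), and the degenerate check that $S \cap V(G) = \emptyset$ indeed fails the condition, matching the fact that a nonempty graph has no empty dominating set.
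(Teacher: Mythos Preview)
Your proof is correct and follows essentially the same approach as the paper: both establish the bijection between dominating sets $D$ of $G$ and selective subsets $D\cup\{z\}$ of $G'$. The only cosmetic difference is that the paper invokes its Lemma~\ref{lemma2} to conclude $z\in S$, whereas you argue it directly from the definition by computing $\NN(z,S\cup V(G))$; your version is in fact slightly more self-contained and explicit about why $\dist(v,S)=1$ forces a same-color neighbor in $S\cap V(G)$.
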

\begin{figure}[ht]
\includegraphics[width=8.4cm]{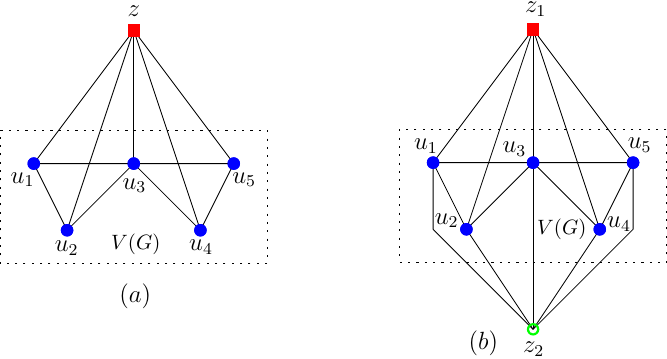}
\centering
\caption{Colors: \emph{blue $\equiv$ disk}, \emph{red $\equiv$ square}, and \emph{green $\equiv$ circle}. (a) Reduction from an instance of \textsc{Minimum Dominating Set} problem to an instance of \mss problem when $c=2$. (b) Example of the reduction when $c=3$.}\label{bubaix}
\end{figure}
\begin{theorem}\label{th1}$^{\ast}$
There exists a constant $\delta >0$ such that it is $\nph$ to approximate the \mss problem within a factor of $\delta \cdot \log n$, where $n$ is the number of vertices in the graph.
\end{theorem}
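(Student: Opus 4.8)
\emph{Proof plan.} The plan is to upgrade the reduction $G\mapsto(G',C)$ behind Lemma~\ref{bubailemma6} to a gap-preserving reduction and chain it with the logarithmic inapproximability of \mds. As recalled just before the statement, \textsc{Minimum Set Cover} is \nph to approximate within $\delta_0\log n$ for some absolute constant $\delta_0>0$~\cite{raz97}, and the standard $L$-reduction from \textsc{Set Cover} to \mds (set-vertices form a clique, each element-vertex is joined to the set-vertices that contain it, and any element-vertex picked by a dominating set can be swapped for an incident set-vertex) transfers this: there is a constant $\delta_1>0$ such that no polynomial-time algorithm approximates \mds within $\delta_1\log n$ unless $\npp=\np$.

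I would then make explicit the constructive content of the reduction of Lemma~\ref{bubailemma6}. In $G'$ the vertex $z$ is the only color-$2$ vertex, is adjacent to all of $V(G)$, and every $u\in V(G)$ lies at distance exactly $1$ from $z$; therefore any selective subset $S$ of $G'$ must contain $z$ (otherwise $\NN(z,\,S\cup V(G))$ consists only of color-$1$ vertices), and for each $u\in V(G)$ the set $\NN(u,\,S\cup\{z\})$ contains a color-$1$ vertex if and only if $S\cap V(G)$ dominates $u$ in $G$. Consequently $S\cap V(G)$ is a dominating set of $G$, and conversely $D\cup\{z\}$ is a selective subset of $G'$ for every dominating set $D$ of $G$. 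Hence $\mathrm{OPT}_{\mathrm{MSS}}(G')=\mathrm{OPT}_{\mathrm{DS}}(G)+1$, both directions of this correspondence are computable in linear time, and $|V(G')|=n+1$.

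Finally I would argue by contradiction. Suppose some polynomial-time algorithm returns, on input $(G',C)$, a selective subset of size at most $\rho(m)\cdot\mathrm{OPT}_{\mathrm{MSS}}(G')$, where $m=|V(G')|$ and $\rho(m)=\tfrac{\delta_1}{4}\log m$. Given a \mds instance $G$ on $n\ge 2$ vertices (so $\mathrm{OPT}_{\mathrm{DS}}(G)\ge 1$), construct $G'$, run the algorithm, and output $S\cap V(G)$. This is a dominating set of $G$ of size at most $\rho(n+1)\bigl(\mathrm{OPT}_{\mathrm{DS}}(G)+1\bigr)-1\le 2\rho(n+1)\,\mathrm{OPT}_{\mathrm{DS}}(G)=\tfrac{\delta_1}{2}\log(n+1)\cdot\mathrm{OPT}_{\mathrm{DS}}(G)\le \delta_1\log n\cdot\mathrm{OPT}_{\mathrm{DS}}(G)$, where the first inequality uses $\mathrm{OPT}_{\mathrm{DS}}(G)\ge1$ and the last uses $n+1\le n^{2}$ for $n\ge2$; the trivial instances with $n\le1$ are handled directly. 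This contradicts the $\delta_1\log n$-inapproximability of \mds, so the theorem holds with $\delta:=\delta_1/4$. In particular, since a polynomial-time $\delta\cdot\log n$-approximation for \mss would force $\npp=\np$, \mss admits no \ptas on general graphs unless $\npp=\np$.

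The only point needing care --- and the closest this gets to a genuine obstacle --- is checking that neither the additive $+1$ in $\mathrm{OPT}_{\mathrm{MSS}}(G')=\mathrm{OPT}_{\mathrm{DS}}(G)+1$ nor the vertex-count shift $n\mapsto n+1$ destroys the logarithmic gap; the constant-factor slack in the ratio absorbs both, using only $\mathrm{OPT}_{\mathrm{DS}}(G)\ge1$ and $\log(n+1)=\Theta(\log n)$. Everything else rests directly on Lemma~\ref{bubailemma6}.
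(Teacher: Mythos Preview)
Your proposal is correct and follows the same route as the paper: invoke the $\delta_1\log n$ inapproximability of \mds inherited from \textsc{Set Cover}, and transport it through the reduction of Lemma~\ref{bubailemma6}. The paper's own proof is terser --- it simply asserts that the reduction is approximation-preserving and concludes --- whereas you explicitly track the additive $+1$ in $\mathrm{OPT}_{\mathrm{MSS}}(G')=\mathrm{OPT}_{\mathrm{DS}}(G)+1$ and the $n\mapsto n+1$ vertex shift, absorbing both into the constant $\delta=\delta_1/4$; this is more rigorous than what the paper actually writes but not a different argument.
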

\begin{remark}
Note that the above reduction remains valid even if we add any number of new vertices (each adjacent to all of $V(G)$) and assign each a distinct color. The correctness of Lemma~\textcolor{blue}{\ref{bubailemma6}} and the resulting hardness theorem continue to hold under this extended construction (see Figure~\textcolor{blue}{\ref{bubaix}(b)}).
\end{remark} 
\section{$\npcs$ of \mss on Unit Disk Graphs}\label{nphard}
A graph \( U = (V(U), E(U)) \) is called a \textit{unit disk graph} (\udg) if its vertices can be represented as points in the Euclidean plane such that an edge exists between two vertices if and only if their Euclidean distance is at most $2$. 

Formally, \( U \) is a UDG if there exists a mapping \( f: V \to \mathbb{R}^2 \) such that:  
\begin{equation}
    (u, v) \in E(U) \iff \| f(u) - f(v) \| \leq 2
\end{equation}
where \( \| \cdot \| \) denotes the Euclidean norm. 

Clark et al.~\textcolor{blue}{\cite{Clark1990}} showed that \textsc{Minimum Dominating Set} (\mds) is $\npc$ in \udg. We reduce from an instance $U$ of the \udg with $\lvert V(U) \rvert = n$ to an instance $U'$ as follows: 
\begin{figure}[ht]
\includegraphics[width=8cm]{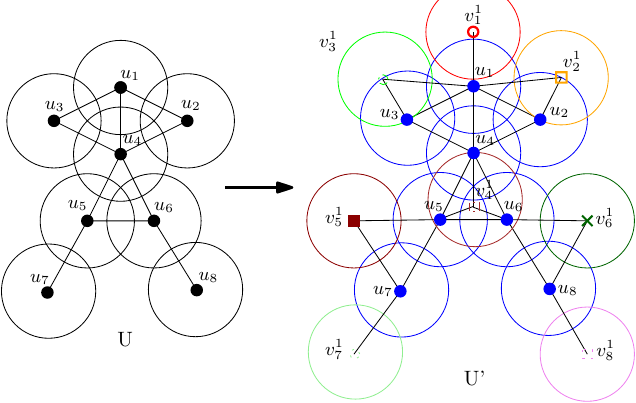}
\centering
\caption{Colors: \emph{blue $\equiv$ disk}, \emph{red $\equiv$ circle}, \emph{orange $\equiv$ fsquare}, \emph{green $\equiv$ dash dotted circle}, \emph{brown $\equiv$ dash dotted fsquare}, \emph{darkred $\equiv$ square}, \emph{darkgreen $\equiv$ cross}, \emph{lightgreen $\equiv$ dotted circle}, and \emph{violet $\equiv$ dotted square}. An example of the reduction when $m=1$. Each \textit{blue} disk in \( U' \) is adjacent to a disk of a distinct color, different from all other colors in \( U' \).}\label{bubai2} 
\end{figure}

\textbf{Reduction.} Define $V(U') = V(U) \cup V(X)$ and $E(U') = E(U) \cup E(X)$, where we introduce a subgraph $X$ with vertex set $V(X)$ and edge set $E(X)$ as follows (see Figure~\textcolor{blue}{\ref{bubai2}}). Initially, set $V(X) := \emptyset$ and $E(X) := \emptyset$. Assign color $0$ to all unit disks in $V(U)$, i.e., $C(V(U)) = \{0\}$. For each $u_i \in V(U)$, introduce a total of $m$ (where $m\in \mathbb{N}$) unit disks $v_i^{1},\dots,v_{i}^{m} \in V(X)$. Additionally, place $v_i^{1},\dots,v_i^{m}$ in such a way that $u_i$ and $v_i^{l}$ are adjacent for $1\leq l\leq m$ (where $v_i^{l}$ may also be adjacent to other unit disks). These new edges are added to $E(X)$. All unit disks in $V(X)$ have distinct colors, none of which is $0$. Thus, we obtain $\lvert V(X)\rvert =mn$, $\lvert V(U') \rvert = nm+n$, and $\lvert C(V(U')) \rvert = nm+1$.
\begin{lemma}\label{npclemma}$^{\ast}$
$U$ has a dominating set of size $t$ if and only if $U'$ has a selective subset of size $nm+t$.
\end{lemma}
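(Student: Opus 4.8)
The plan is to establish both directions by exploiting the fact that each vertex $v_i^l \in V(X)$ has a unique color and is pendant-like (its only neighbor of a different color among the forced structure is $u_i$, and all other neighbors are in $V(X)$ with yet other distinct colors). First I would record the following observation about any selective subset $S$ of $U'$: since each color class $\{v_i^l\}$ is a singleton, the selective-subset condition forces $v_i^l \in S$ for every $i \in [n]$ and every $l \in [m]$. Indeed, for a vertex $v$ whose color class is $\{v\}$, the set $S \cup (V(G')\setminus V_{C(v)})$ contains $v$ regardless, and the nearest neighbor of $v$ of its own color can only be $v$ itself; but $v$ lies in this set only if $v \in S$ (it is not removed by the ``$V(G')\setminus V_{C(v)}$'' part since $v \in V_{C(v)}$). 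Hence $V(X) \subseteq S$, contributing exactly $nm$ vertices.

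Next I would analyze the condition imposed on the vertices of $V(U)$, all of which have color $0$. For $u_i \in V(U)$, the relevant set is $S \cup (V(U') \setminus V_0) = S \cup V(X)$. Since $V(X) \subseteq S$ always, this set equals $S$, and it contains all of $V(X)$. Each $u_i$ is adjacent to $v_i^1,\dots,v_i^m \in V(X)$, so $\dist(u_i, V(X)) = 1$ and thus $\dist(u_i, S) = 1$. Therefore $\NN(u_i, S)$ consists of the neighbors of $u_i$ in $S$. For the selective condition to hold at $u_i$, at least one of these nearest neighbors must have color $0$, i.e., $u_i$ must have a color-$0$ neighbor in $S$, or $u_i \in S$ itself (giving $\dist(u_i,S)=0$, in which case $u_i$ is its own nearest neighbor). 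Equivalently, writing $D = S \cap V(U)$, the condition at all vertices of $V(U)$ is exactly that $\n[u_i, V(U)] \cap D \neq \emptyset$ for every $i$ — that is, $D$ is a dominating set of $U$. (Here I would note that any $v_i^l \in S \cap V(X)$ adjacent to $u_i$ does not help, since its color is not $0$, so only $D$ matters.)

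Combining the two observations: $S$ is a selective subset of $U'$ if and only if $S = V(X) \cup D$ for some dominating set $D$ of $U$, hence $\lvert S \rvert = nm + \lvert D \rvert$. This immediately yields both directions of the equivalence with $t = \lvert D \rvert$. For the forward direction, given a dominating set $D$ of size $t$, set $S = V(X) \cup D$; the argument above shows $S$ is selective and $\lvert S\rvert = nm+t$. For the reverse, given a selective subset $S$ of size $nm+t$, we have $V(X)\subseteq S$, so $D := S \cap V(U)$ has size $t$ and is dominating.

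The main obstacle — and the step requiring the most care — is verifying that no vertex of $V(X)$ imposes a nontrivial constraint that could force extra vertices or, conversely, that placing a vertex $v_i^l$ in $S$ never interferes with the distance-$1$ analysis at $u_i$. The subtlety is that a $v_i^l$ may be adjacent to other disks of $V(X)$ with their own distinct colors, so I must check that the selective condition at each such $v_i^l$ is satisfied purely by $v_i^l \in S$ (which holds, since then $\NN(v_i^l, S\cup(V(U')\setminus V_{C(v_i^l)})) = \{v_i^l\}$ and $C(v_i^l)=C(v_i^l)$), and that no cross-adjacency among the $v_i^l$'s creates a vertex whose unique-color class forces something outside the intended pattern. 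Since every vertex of $V(X)$ has a singleton color class, the same pendant argument applies uniformly, so this reduces to a clean case check. I would also double-check the degenerate possibility $u_i \in S$: this is allowed and still corresponds to $u_i \in D$ dominating itself, so it does not break the correspondence.
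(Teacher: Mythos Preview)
Your proposal is correct and follows essentially the same approach as the paper: first argue that $V(X)\subseteq S$ for any selective subset (the paper invokes Lemma~\ref{lemma2} on singleton blocks, you argue directly from singleton color classes---the same observation), then show that the selective condition on the color-$0$ vertices is exactly the domination condition $\n[u_i,V(U)]\cap D\neq\emptyset$ for $D=S\cap V(U)$. Your treatment is slightly more explicit in checking the possible cross-adjacencies among the $v_i^l$ and the degenerate case $u_i\in S$, but the core argument is identical.
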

\begin{theorem}\label{th2}$^{\ast}$
\mss is $\npc$ on unit disk graphs.
\end{theorem}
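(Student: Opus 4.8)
The plan is to derive the theorem from Lemma~\ref{npclemma} together with the \npcs of \mds on unit disk graphs (Clark et al.~\cite{Clark1990}). Membership in \np is routine: a selective subset $S$ of size at most $s$ is its own certificate, and it is verified in polynomial time by running a BFS from every vertex $v$ to obtain all distances $\dist(v,\cdot)$, hence $\dist\bigl(v,\,S\cup(V(G)\setminus V_{C(v)})\bigr)$ and the associated nearest-neighbour set, and then checking for each $v$ that this set contains a vertex of colour $C(v)$.

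For hardness, fix $m=1$ and apply the reduction of this section to an instance $(U,t)$ of \mds on unit disk graphs, where $U$ is supplied with a unit-disk representation $f$ (the hard instances of~\cite{Clark1990} are explicitly geometric). Place each auxiliary vertex $v_i^{1}$ at $f(u_i)+(\varepsilon_i,0)$ for pairwise distinct, sufficiently small $\varepsilon_i>0$ chosen generically, so that the new points are all distinct from one another and from $f(V(U))$ and none sits at Euclidean distance exactly $2$ from any other point. Then $u_i$ is adjacent to $v_i^{1}$ (and possibly to further vertices, which does no harm), so $U'$ is a unit disk graph by construction, and it is connected since $U$ is. Colouring $V(U)$ with $0$ and giving each $v_i^{1}$ a private colour meets the hypotheses of Lemma~\ref{npclemma}, and the construction is clearly polynomial. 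Lemma~\ref{npclemma} gives the exact-size equivalence, which upgrades to a threshold equivalence in the usual way: a dominating set of size $t'\le t$ yields a selective subset of size $n+t'\le n+t$, while conversely any selective subset $S$ must contain all $n$ private-colour vertices (each such vertex has a neighbour of a different colour but is the unique vertex of its own colour, so it cannot be satisfied unless it lies in $S$), so that $S\setminus V(X)$ has size $|S|-n$ and, by the argument establishing Lemma~\ref{npclemma}, is a dominating set of $U$. Hence $U$ has a dominating set of size at most $t$ iff $U'$ has a selective subset of size at most $n+t$, and therefore \mss is \nph, and so \npc, on unit disk graphs.

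The only genuinely delicate step is the geometric realizability of $U'$ --- making sure the augmented graph really belongs to the class of unit disk graphs. Attaching true pendant vertices need not preserve this class when the input representation is locally crowded, which is precisely why the reduction starts from a representation-equipped \mds instance and obtains $v_i^{1}$ by a small generic perturbation of $f(u_i)$, letting the incidental adjacencies of $v_i^{1}$ be whatever the geometry dictates; Lemma~\ref{npclemma} was stated so as to tolerate exactly these extra adjacencies.
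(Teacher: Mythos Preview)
Your proof is correct and follows exactly the paper's approach: \np\ membership is routine, and hardness comes from Lemma~\ref{npclemma} together with the \npcs\ of \mds on unit disk graphs from~\cite{Clark1990}. You merely spell out details the paper leaves implicit---the explicit geometric placement of the auxiliary disks as small perturbations (which the paper only asserts is possible), the conversion from the exact-size equivalence of Lemma~\ref{npclemma} to a threshold equivalence, and the observation that extra adjacencies of the $v_i^{1}$ are harmless---so there is no substantive difference between the two arguments.
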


\section{\ptas of \mss on Unit Disk Graphs}\label{ptas}
A unit disk graph may admit multiple geometric representations. In this work, we assume that the geometric representation \( f \) is either unknown or not explicitly given. We first describe our approach for a general graph before restricting to unit disk graphs. If \( G \) is not a connected graph, we apply the algorithm independently on each component; hence, we may assume that \( G \) is connected. 

The key idea is that blocks are independent within any selective subset solution.
Exploiting this property, we compute \emph{local} selective subsets within each block independently. For the set $B_{i,1}$ corresponding to a given block $B_i$, we define a family of sets $D_{i,1}^{1},\dots,D_{i,1}^{t_i}$ such that $\dist(D_{i,1}^{j},D_{i,1}^{\ell})>2$ for all $j \neq \ell$. By Lemmas~\textcolor{blue}{\ref{lemma2}} and~\textcolor{blue}{\ref{lemma3}}, this ensures that no two sets $D_{i,1}^{j}$ and $D_{i,1}^{\ell}$ share a common vertex in optimal solution. Consequently, the solutions for $D_{i,1}^{1},\dots,D_{i,1}^{t_i}$ provide a lower bound on the size of the optimal solution. However, the union of these solutions does not necessarily form a valid solution for the block $B_{i}$. To address this, we enlarge each set $D_{i,1}^{j}$ into a corresponding set $E_{i,1}^{j}$, ensuring that the union of the solutions for $E_{i,1}^{1},\dots,E_{i,1}^{t_i}$ yields a valid solution for $B_i$. We refer to the solutions for $D_{i,1}^{j}$ and $E_{i,1}^{j}$ as \emph{local solutions}. By combining these local solutions, we obtain a blockwise selective subset, and taking the union over all blocks 
yields a global solution used in our \ptas.  

For any subset $U \subseteq V(G)$, we denote its minimum selective subset (\emph{local}) by $S^{\min}(U)$ and a selective subset (\emph{local}) of $U$ by $S(U)$. The global optimum is denoted by $S^{\min}$.
\begin{lemma}\label{lemma3}$^{\ast}$
For any minimum selective subset $S^{min}$ of $G$, we have 
\begin{itemize}
    \item No vertex $v\in B_{i}\setminus B_{i,3}$ belongs to $S^{min}$.
    \item $S^{min}\subseteq \bigcup_{i=1}^{k}B_{i,3}$.
\end{itemize}
\end{lemma}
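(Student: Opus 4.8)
The plan is to prove the first bullet, after which the second is immediate: the blocks $B_1,\dots,B_k$ partition $V(G)$, so if $S^{min}$ meets no $B_i\setminus B_{i,3}$ then every vertex of $S^{min}$ lies in its block's $B_{i,3}$, i.e. $S^{min}\subseteq\bigcup_i B_{i,3}$. (I assume $G$ is not monochromatic, so that every block has $B_{i,1}\neq\emptyset$ since $G$ is connected; the monochromatic case is handled separately and is excluded here anyway.) For the first bullet I will in fact show more: deleting \emph{any} $v\in B_i\setminus B_{i,3}$ from an arbitrary selective subset $S$ leaves a selective subset; applied to a minimum $S^{min}$ this contradicts minimality.

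The argument rests on two elementary facts about a vertex $w\in B_i$, writing $\ell:=C(w)$ and $d_w:=\dist(w,B_{i,1})$. \emph{(i)} $\dist(w,V(G)\setminus V_\ell)=d_w+1$: any path from $w$ to a differently coloured vertex must pass through a vertex of $B_{i,1}$ in the step just before it leaves $B_i$, so it has length $\ge d_w+1$, and a geodesic from $w$ to a nearest vertex of $B_{i,1}$ (which stays inside $B_i$) followed by one more step attains $d_w+1$. \emph{(ii)} Every vertex at hop-distance $\le d_w$ from $w$ lies in $B_i$ and hence has colour $\ell$ (walk out along a geodesic; the first $d_w$ vertices stay at positive distance from $B_{i,1}$, so none is in $B_{i,1}$ and none has a neighbour of a different colour), while any colour-$\ell$ vertex lying in a block other than $B_i$ is at distance $\ge d_w+2$ from $w$ (its geodesic from $w$ must leave $V_\ell$ and later re-enter it). Combining (i) and (ii) gives the criterion I need: \emph{for $w\in B_i$, the selectivity constraint at $w$ holds for a vertex set $T$ whenever $T\cap B_i$ contains a vertex within distance $d_w+1$ of $w$} — such a vertex has colour $\ell$ and is no farther than the nearest differently coloured vertex, so $\NN(w,T\cup(V(G)\setminus V_\ell))$ contains a colour-$\ell$ vertex, namely that vertex itself or, if the realized distance is strictly smaller, some vertex within distance $<d_w+1$, which has colour $\ell$ by (ii). (The converse of this criterion is essentially Lemma~\ref{lemma2}.)

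Now fix a selective subset $S$ and $v\in B_i\setminus B_{i,3}$ with $v\in S$; set $S'=S\setminus\{v\}$ and $\ell=C(B_i)$. I check each vertex $w$'s constraint in $S'$. If $C(w)\neq\ell$, then $v\in V(G)\setminus V_{C(w)}$, so $S'\cup(V(G)\setminus V_{C(w)})=S\cup(V(G)\setminus V_{C(w)})$ and nothing changes. If $C(w)=\ell$ but $w$ lies in a block $B_j\neq B_i$, then by (ii) $\dist(w,v)\ge\dist(w,B_{j,1})+2>\dist(w,V(G)\setminus V_\ell)\ge\dist(w,S\cup(V(G)\setminus V_\ell))$, so $v$ is not a nearest neighbour of $w$ in $S\cup(V(G)\setminus V_\ell)$ and deleting it does not change $\NN(w,\cdot)$. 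Finally, if $w\in B_i$, choose $x\in B_{i,1}$ with $\dist(w,x)=d_w$; since $v\notin B_{i,1}\cup B_{i,2}$ we have $\dist(v,B_{i,1})\ge 2$, so $v\notin B_{i,3}$. By Lemma~\ref{lemma2}, $\n[x,B_{i,3}]\cap S\neq\emptyset$, and any $s$ in this intersection satisfies $s\in B_{i,3}$, hence $s\neq v$, so $s\in S'\cap B_i$ with $\dist(w,s)\le\dist(w,x)+\dist(x,s)\le d_w+1$. By the criterion, $w$'s constraint holds for $S'$. Thus $S'$ is selective; taking $S=S^{min}$ contradicts minimality, so no minimum selective subset contains a vertex of $B_i\setminus B_{i,3}$.

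The step I expect to be the crux is the criterion of the second paragraph, used as a \emph{sufficient} condition (its necessity being Lemma~\ref{lemma2}): that placing one selected vertex within distance $1$ of each boundary vertex of a block automatically serves every deeper vertex of that block. What makes this work is precisely fact (i) — the nearest differently coloured vertex of any $w\in B_i$ sits at hop-distance exactly $d_w+1$, with every strictly closer vertex sharing $w$'s colour — and once (i) and (ii) are nailed down, the case analysis above is routine bookkeeping.
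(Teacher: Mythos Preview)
Your proof is correct and follows the same approach as the paper's: show that deleting any $v \in B_i \setminus B_{i,3}$ from a selective subset preserves selectivity (using Lemma~\ref{lemma2} to locate a surviving vertex in $B_{i,3}$ near the boundary), contradicting minimality. In fact your argument is more thorough than the paper's terse proof, which only explicitly verifies the constraint at $v$ itself, whereas your three-case analysis (different colour; same colour, different block; same block) checks every vertex and makes the distance facts (i) and (ii) explicit.
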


Lemma~\textcolor{blue}{\ref{lemma2}} ensures that for each vertex $v \in B_{i,1}$, either $v \in S^{min}$ or at least one of its adjacent vertices in $B_{i,3}$ belongs to $S^{min}$. Importantly, the choice of including $v$ itself or one of its adjacent vertices from $B_{i,3}$ in $S^{min}$ does not affect the selection of vertices in other blocks. Combined with Lemma~\textcolor{blue}{\ref{lemma3}}, which establishes that $S^{\min} \subseteq \bigcup_{i=1}^{k} B_{i,3}$, we obtain the following remark:
\begin{remark}\label{obs3}
The blocks are independent in constructing a selective subset; that is, the selection of vertices in one block does not constrain the selection in other blocks.
\end{remark}
We now apply an appropriate algorithm to each block separately due to Remark~\textcolor{blue}{\ref{obs3}}. To do this, we first define a selective subset for each block as follows. 
\begin{definition}[Selective Subset of $B_{i,1}$]
A selective subset of $B_{i,1}$, denoted by $S(B_{i,1})$, is a subset of $B_{i,3}$ such that for every vertex $v \in B_{i,1}$, either $v \in S(B_{i,1})$ or $\n(v,B_{i,3}) \cap S(B_{i,1}) \neq \emptyset$. By Lemmas~\textcolor{blue}{\ref{lemma2}} and~\textcolor{blue}{\ref{lemma3}}, it follows that $S(B_{i,1}) \subseteq \n[B_{i,1},B_{i,3}] \subseteq B_{i,3}$.
\end{definition}
\begin{theorem}\label{th3}$^{\ast}$
Let $G$ be a connected graph with blocks $B_{1},\dots,B_{k}$, and for each block $B_i$ let $S(B_{i,1})$ be any selective subset of $B_{i,1}$.
Let $S=\bigcup_{i=1}^{k} S(B_{i,1})$, then $S$ is a selective subset of $G$. Moreover, if each $S(B_{i,1})$ satisfies 
\(
|S(B_{i,1})| \le (1+\epsilon)\,|S^{\min}(B_{i,1})|,
\)
then
$|S| \le (1+\epsilon)\,|S^{\min}|$,
where $S^{\min}$ denotes a minimum selective subset of $G$.
\end{theorem}

\subsection{Finding Local Selective Subsets}
We now establish a bound on a local solution using \emph{2-distance subsets} for our problem and then merge all local solutions to obtain the desired solution.
\begin{definition}[2-distance Subsets]
A collection of subsets of the vertices in $B_{i,1}$, denoted as $D_i=\{D_{i,1}^{1},\dots, D_{i,1}^{t_i}\}$, is called a collection of \emph{2-distance subsets} if the following properties hold (see example in Figure~\textcolor{blue}{\ref{bubai1}}):
\begin{itemize}
    \item $D_{i,1}^{j}\subseteq B_{i,1}$ for all $1\leq j\leq t_i$.
    \item The subgraph $G[D_{i,1}^{j}]$ is connected in the induced subgraph $G[B_{i,3}]$, i.e., the induced subgraph $G[D_{i,1}^{j}]$ may not be connected itself, but any two vertices in $D_{i,1}^{j}$ must have a path between them in $G[B_{i,3}]$.
    \item The subsets are pairwise at a distance greater than two in $G[B_{i,3}]$, i.e., $\dist (D_{i,1}^j, D_{i,1}^l)>2$ in the subgraph $G[B_{i,3}]$ when $j\neq l$. 
\end{itemize}
\end{definition}
\begin{definition}[Local Selective Subset]
A \emph{local} selective subset of $D_{i,1}^j$, denoted by $S(D_{i,1}^j)$, is a subset of $B_{i,3}$ such that for every vertex $v \in D_{i,1}^j$, either $v \in S(D_{i,1}^j)$ or $\n(v,B_{i,3}) \cap S(D_{i,1}^j) \neq \emptyset$. By Lemmas~\textcolor{blue}{\ref{lemma2}} and~\textcolor{blue}{\ref{lemma3}}, it follows that
$S(D_{i,1}^j) \subseteq \n[D_{i,1}^j,B_{i,3}] \subseteq B_{i,3}$.
\end{definition}

\begin{lemma}\label{lemma4}$^{\ast}$
For any $j\neq l$, the following holds:
\begin{itemize}
    \item $\n [D_{i,1}^j, B_{i,3}]\cap  \n [D_{i,1}^l,B_{i,3}]=\emptyset$.
    \item $S^{min}(D_{i,1}^j) \cap S^{min}(D_{i,1}^l)= \emptyset$.
    \item $\big(S^{min}\cap S^{min}(D_{i,1}^{j})\big)\cap\big(S^{min}\cap S^{min}(D_{i,1}^{l})\big)=\emptyset.$
\end{itemize}
\end{lemma}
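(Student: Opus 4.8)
The plan is to prove the three claims of Lemma~\ref{lemma4} in order, since each builds on the previous one, and all three are essentially consequences of the distance separation $\dist(D_{i,1}^j, D_{i,1}^\ell) > 2$ in $G[B_{i,3}]$ together with the structural facts from Lemmas~\ref{lemma2} and~\ref{lemma3}.

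\textbf{First claim.} I would argue by contradiction: suppose some vertex $w \in \n[D_{i,1}^j, B_{i,3}] \cap \n[D_{i,1}^\ell, B_{i,3}]$. By definition of $\n[\cdot,\cdot]$, there is a vertex $a \in D_{i,1}^j$ with $w \in \n[a, B_{i,3}]$ (so $w = a$ or $w$ is adjacent to $a$ within $G[B_{i,3}]$) and similarly a vertex $b \in D_{i,1}^\ell$ with $w \in \n[b, B_{i,3}]$. In all cases $\dist(a, w) \le 1$ and $\dist(b, w) \le 1$ in $G[B_{i,3}]$, so by the triangle inequality $\dist(a,b) \le 2$ in $G[B_{i,3}]$, hence $\dist(D_{i,1}^j, D_{i,1}^\ell) \le 2$, contradicting the defining property of a collection of 2-distance subsets. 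This is the core of the whole lemma and the rest follows cleanly.

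\textbf{Second claim.} By the definition of a local selective subset, $S^{min}(D_{i,1}^j) \subseteq \n[D_{i,1}^j, B_{i,3}]$ and $S^{min}(D_{i,1}^\ell) \subseteq \n[D_{i,1}^\ell, B_{i,3}]$ (this containment is exactly what Lemmas~\ref{lemma2} and~\ref{lemma3} give us, as stated in the definition). Hence their intersection is contained in $\n[D_{i,1}^j, B_{i,3}] \cap \n[D_{i,1}^\ell, B_{i,3}]$, which is empty by the first claim. The third claim is immediate: $\big(S^{min} \cap S^{min}(D_{i,1}^j)\big) \cap \big(S^{min} \cap S^{min}(D_{i,1}^\ell)\big) \subseteq S^{min}(D_{i,1}^j) \cap S^{min}(D_{i,1}^\ell) = \emptyset$ by the second claim. (I note the paper's statement of the third item writes $D_{i,1}^j$ in both factors, which appears to be a typo for $D_{i,1}^j$ and $D_{i,1}^\ell$; I would state and prove the intended version.)

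\textbf{Main obstacle.} There is no serious obstacle here — the lemma is a short consequence of the triangle inequality applied in the induced subgraph $G[B_{i,3}]$. The one point requiring care is making sure all distances are measured in $G[B_{i,3}]$ rather than in $G$ (the 2-distance condition is stated for $G[B_{i,3}]$, and the local selective subsets also live in $B_{i,3}$), so the triangle inequality argument must stay within that induced subgraph; since $G[B_{i,3}]$ is itself a graph, the triangle inequality holds there and this is purely a matter of being consistent about the ambient graph. A secondary point is to invoke the containment $S^{min}(D_{i,1}^j) \subseteq \n[D_{i,1}^j, B_{i,3}]$ explicitly, citing that it is exactly the content recorded in the definition of local selective subset (which in turn rests on Lemmas~\ref{lemma2} and~\ref{lemma3}).
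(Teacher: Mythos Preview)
Your proposal is correct and follows essentially the same approach as the paper: the paper argues that $\dist(D_{i,1}^j,D_{i,1}^l)>2$ forces $\dist(\n[D_{i,1}^j,B_{i,3}],\n[D_{i,1}^l,B_{i,3}])\geq 1$ (your triangle-inequality contradiction is the contrapositive of this), then deduces the second claim from the containment $S^{min}(D_{i,1}^j)\subseteq \n[D_{i,1}^j,B_{i,3}]$ and the third from the second. Your observation about the apparent typo in the third item and your care about working inside $G[B_{i,3}]$ are both well taken.
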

Lemma~\textcolor{blue}{\ref{lemma4}} implies that the solutions $S^{min}(D_{i,1}^{j})$ and $S^{min}(D_{i,1}^{l})$ do not share a common vertex in $S^{min}$ for $j\neq l$.
\begin{lemma}\label{manna1}$^{\ast}$
$S^{min}\cap \n [D_{i,1}^j, B_{i,3}]$ is a local selective subset of $D_{i,1}^j$.
\end{lemma}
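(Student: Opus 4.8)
The plan is to show that $T := S^{\min}\cap \n[D_{i,1}^j,B_{i,3}]$ satisfies the definition of a local selective subset of $D_{i,1}^j$: it is a subset of $B_{i,3}$ (immediate, since $\n[D_{i,1}^j,B_{i,3}]\subseteq B_{i,3}$ by Lemma~\ref{lemma3}), and for every $v\in D_{i,1}^j$ we have $v\in T$ or $\n(v,B_{i,3})\cap T\neq\emptyset$. So fix a vertex $v\in D_{i,1}^j\subseteq B_{i,1}$. By Lemma~\ref{lemma2}, $\n[v,B_{i,3}]\cap S^{\min}\neq\emptyset$; let $w$ be a vertex in this intersection. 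Then $w=v$ or $w\in\n(v,B_{i,3})$, and in either case $w\in\n[v,B_{i,3}]\subseteq\n[D_{i,1}^j,B_{i,3}]$. Hence $w\in S^{\min}\cap\n[D_{i,1}^j,B_{i,3}]=T$, which immediately gives $v\in T$ (if $w=v$) or $\n(v,B_{i,3})\cap T\neq\emptyset$ (if $w\in\n(v,B_{i,3})$). This verifies the defining condition for every $v\in D_{i,1}^j$, so $T$ is a local selective subset of $D_{i,1}^j$.

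The one point that needs a little care — and is the only real obstacle — is making sure the chosen witness $w$ actually lands inside $\n[D_{i,1}^j,B_{i,3}]$ rather than merely inside $B_{i,3}$: this is why we invoke Lemma~\ref{lemma2} with respect to the \emph{local} neighborhood $\n[v,B_{i,3}]$ (a set that lies within the block $B_i$) instead of the global graph. Since $\n[v,B_{i,3}]\subseteq\n[D_{i,1}^j,B_{i,3}]$ by definition of the latter as a union over $v\in D_{i,1}^j$, the membership $w\in T$ follows. No use of the first two bullets of Lemma~\ref{lemma4} is needed here; those are relevant for the complementary direction (combining local solutions), not for this lemma. The argument is otherwise a direct unfolding of the definitions, so the proof is short.
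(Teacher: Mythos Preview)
Your proof is correct and follows essentially the same approach as the paper's: both invoke Lemma~\ref{lemma2} to find, for each $v\in D_{i,1}^j\subseteq B_{i,1}$, a vertex of $S^{\min}$ in $\n[v,B_{i,3}]$ and then observe that $\n[v,B_{i,3}]\subseteq \n[D_{i,1}^j,B_{i,3}]$. One tiny quibble: the containment $\n[D_{i,1}^j,B_{i,3}]\subseteq B_{i,3}$ holds directly by the definition of $\n[\cdot,\cdot]$ (together with $D_{i,1}^j\subseteq B_{i,1}\subseteq B_{i,3}$), not by Lemma~\ref{lemma3}.
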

\begin{lemma}\label{lemma5}$^{\ast}$
For any collection of 2-distance subsets $D_i=\{D_{i,1}^1,$$\dots,$ $ D_{i,1}^{t_i}\}$,  where $1\leq i\leq k$ in the graph $G$; we have: $\sum_{i=1}^{k}\sum_{j=1}^{t_i}\lvert S^{min}(D_{i,1}^{j})\rvert \leq \lvert S^{min}\rvert.$
\end{lemma}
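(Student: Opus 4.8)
The plan is to combine two facts established earlier: first, that across distinct blocks the local solutions are vertex-disjoint and contained in the global optimum (Lemma~\ref{lemma3} and Remark~\ref{obs3}); second, that within a single block, the pieces indexed by a collection of 2-distance subsets are mutually disjoint (Lemma~\ref{lemma4}). Together these let us exhibit a collection of pairwise-disjoint subsets of $S^{min}$, one for each pair $(i,j)$, each of size at least $\lvert S^{min}(D_{i,1}^j)\rvert$; summing over all pairs then gives the claimed inequality.

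First I would fix $i$ and $j$ and consider the set $T_{i,j} := S^{min}\cap \n[D_{i,1}^j, B_{i,3}]$. By Lemma~\ref{manna1}, $T_{i,j}$ is a local selective subset of $D_{i,1}^j$, and since $S^{min}(D_{i,1}^j)$ is a \emph{minimum} local selective subset of $D_{i,1}^j$, we have $\lvert S^{min}(D_{i,1}^j)\rvert \le \lvert T_{i,j}\rvert$. Next I would argue that the sets $\{T_{i,j}\}$ are pairwise disjoint. For two pairs with the same block index $i$ but $j\neq \ell$, disjointness of $T_{i,j}$ and $T_{i,\ell}$ follows from the first bullet of Lemma~\ref{lemma4}, namely $\n[D_{i,1}^j,B_{i,3}]\cap \n[D_{i,1}^\ell,B_{i,3}]=\emptyset$. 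For two pairs with distinct block indices $i\neq i'$, I would use that $T_{i,j}\subseteq B_{i,3}$ and $T_{i',j'}\subseteq B_{i',3}$, together with the fact that $B_{i,3}$ and $B_{i',3}$ are disjoint for $i\neq i'$ (they lie in different blocks, which partition $V(G)$ by color-maximality). Hence all the $T_{i,j}$ are pairwise disjoint subsets of $S^{min}$.

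Finally I would assemble the bound:
\[
\sum_{i=1}^{k}\sum_{j=1}^{t_i}\lvert S^{min}(D_{i,1}^{j})\rvert
\;\le\; \sum_{i=1}^{k}\sum_{j=1}^{t_i}\lvert T_{i,j}\rvert
\;=\; \Bigl\lvert \bigcup_{i=1}^{k}\bigcup_{j=1}^{t_i} T_{i,j}\Bigr\rvert
\;\le\; \lvert S^{min}\rvert,
\]
where the middle equality is exactly the pairwise disjointness just established and the last inequality holds because each $T_{i,j}\subseteq S^{min}$.

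I expect the main obstacle to be a clean justification of cross-block disjointness: one must be careful that $\n[D_{i,1}^j,B_{i,3}]$ really stays inside $B_{i,3}$ (which is immediate from the definition of the local selective subset, since $S(D_{i,1}^j)\subseteq \n[D_{i,1}^j,B_{i,3}]\subseteq B_{i,3}$), and that the $B_{i,3}$ across different blocks are genuinely disjoint — this uses that the blocks $B_1,\dots,B_k$ partition $V(G)$, and that $B_{i,3}\subseteq B_i$. The within-block part is a direct invocation of Lemma~\ref{lemma4}, and the minimality step is immediate from Lemma~\ref{manna1}; so the only real care needed is in the bookkeeping that turns disjointness into the summed inequality.
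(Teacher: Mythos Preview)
Your proposal is correct and follows essentially the same approach as the paper: define $T_{i,j}=S^{\min}\cap \n[D_{i,1}^j,B_{i,3}]$, use Lemma~\ref{manna1} to get $\lvert S^{\min}(D_{i,1}^j)\rvert\le\lvert T_{i,j}\rvert$, and then sum using pairwise disjointness of the $T_{i,j}$. If anything, you are slightly more careful than the paper, which invokes Lemma~\ref{lemma4} for disjointness without explicitly treating the cross-block case $i\neq i'$ (where Lemma~\ref{lemma4} as stated does not directly apply); your argument via $T_{i,j}\subseteq B_{i,3}\subseteq B_i$ and the fact that the blocks partition $V(G)$ fills that small gap cleanly.
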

Lemma~\textcolor{blue}{\ref{lemma5}} shows that 2-distance subsets yield a lower bound on the size of a minimum selective subset. However, the set $\sum_{i=1}^{k}\sum_{j=1}^{t_i} S^{min}(D_{i,1}^{j})$ need not form a selective subset of the entire graph $G$. To construct a selective subset for $G$, we enlarge each $D_{i,1}^j$ to a corresponding set $E_{i,1}^j$ that remains locally bounded while still providing a valid local solution. This enlargement allows us to approximate a selective subset of $G$.
\begin{theorem}\label{th4}$^{\ast}$
Let $D_{i}=\{D_{i,1}^1,\dots,D_{i,1}^{t_i}\}$ be a collection of 2-distance subsets, and $\{E_{i,1}^1,\dots,E_{i,1}^{t_i}\}$ be the corresponding collection of subsets of $B_{i,1}$ such that $D_{i,1}^{j}\subseteq E_{i,1}^{j}$ for all $1\leq i\leq k$ and $1\leq j\leq t_i$. 

If there exists a bound $\delta\geq 1$ such that $\lvert S^{min}(E_{i,1}^{j})\rvert\leq \delta \cdot \lvert S^{min}(D_{i,1}^j)\rvert$ 
for all $1\leq i\leq k$ and $1\leq j\leq t_i$, and if $\bigcup_{i=1}^{k}\bigcup_{j=1}^{t_i}S^{min}(E_{i,1}^j)$ forms a selective subset of $G$, then $\bigcup_{i=1}^{k}\bigcup_{j=1}^{t_i}S^{min}(E_{i,1}^j)$ is a $\delta$-approximation of a minimum selective subset of $G$.
\end{theorem}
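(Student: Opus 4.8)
The plan is to prove that the set $T:=\bigcup_{i=1}^{k}\bigcup_{j=1}^{t_i}S^{min}(E_{i,1}^j)$ satisfies the two inequalities $\lvert S^{min}\rvert\le \lvert T\rvert\le \delta\cdot\lvert S^{min}\rvert$, which is precisely the assertion that $T$ is a $\delta$-approximate minimum selective subset of $G$. The left inequality is immediate: by the second hypothesis $T$ is a selective subset of $G$, so minimality of $S^{min}$ forces $\lvert S^{min}\rvert\le \lvert T\rvert$. Thus only the right inequality needs work, and it will be obtained by chaining three bounds.

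First I would apply the union bound, $\lvert T\rvert\le \sum_{i=1}^{k}\sum_{j=1}^{t_i}\lvert S^{min}(E_{i,1}^j)\rvert$; note that here we deliberately do \emph{not} need the enlarged sets $E_{i,1}^j$ to be pairwise far apart or disjoint, since we only want an upper bound on $\lvert T\rvert$, and any overcounting only helps. Next I invoke the first hypothesis termwise, $\lvert S^{min}(E_{i,1}^j)\rvert\le \delta\cdot\lvert S^{min}(D_{i,1}^j)\rvert$, to get $\sum_{i,j}\lvert S^{min}(E_{i,1}^j)\rvert\le \delta\sum_{i,j}\lvert S^{min}(D_{i,1}^j)\rvert$. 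Finally, Lemma~\ref{lemma5} supplies $\sum_{i=1}^{k}\sum_{j=1}^{t_i}\lvert S^{min}(D_{i,1}^j)\rvert\le \lvert S^{min}\rvert$. Composing the three inequalities yields $\lvert T\rvert\le \delta\cdot\lvert S^{min}\rvert$, completing the sandwich and hence the theorem.

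I do not expect a genuine obstacle inside this theorem: all the substance has been front-loaded into the earlier lemmas, and the present statement is essentially a bookkeeping argument. The one place I would be careful is in making sure the cited lower bound is used correctly — Lemma~\ref{lemma5} rests on the fact that the closed neighbourhoods $\n[D_{i,1}^j,B_{i,3}]$ are pairwise disjoint (Lemma~\ref{lemma4}) across all $i,j$ (disjointness across distinct blocks because $B_{i,3}\subseteq B_i$ and the blocks partition $V(G)$; disjointness within a block by the $2$-distance property), together with the fact that $S^{min}$ restricted to each such neighbourhood is itself a valid local selective subset (Lemma~\ref{manna1}) and that $S^{min}\subseteq\bigcup_i B_{i,3}$ (Lemma~\ref{lemma3}). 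Beyond that, the only nontrivial task — which belongs to the subsequent unit-disk construction rather than to this proof — is to actually exhibit enlargements $E_{i,1}^j$ that simultaneously satisfy the hypotheses with $\delta=1+\epsilon$ and whose local optima union to a feasible selective subset of $G$; the present theorem just certifies that once such a construction is in hand, the approximation guarantee follows.
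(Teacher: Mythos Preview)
Your proof is correct and follows essentially the same approach as the paper's: a union bound on $\lvert T\rvert$, then the termwise hypothesis $\lvert S^{min}(E_{i,1}^{j})\rvert\le \delta\lvert S^{min}(D_{i,1}^j)\rvert$, and finally Lemma~\ref{lemma5} to bound the sum by $\delta\lvert S^{min}\rvert$. The paper's proof is just this three-line chain of inequalities; your added remarks on the lower bound and on the machinery behind Lemma~\ref{lemma5} are accurate but go slightly beyond what the paper writes.
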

\subsection{Finding a Global Selective Subset}
For $r=0,1,\dots$, we recursively define the $r$-th neighborhood of any vertex $v \in B_{i,1}$ in $B_{i,3}$ by
\[
\n_i^{r}[v,B_{i,3}] = \n[\n_i^{r-1}[v,B_{i,3}], B_{i,3}],
\]
with
\[
\n_i^{0}[v,B_{i,3}] = \{v\}, \quad \n_i^{1}[v,B_{i,3}] = \n[v,B_{i,3}].
\]
Since $\n _i^{r}[v, B_{i,3}] \subseteq B_{i,3}$, we partition $\n_i^{r}[v,B_{i,3}]$ into $X_i^{r} \subseteq B_{i,1}$ and $Y_i^{r}\subseteq B_{i,2}$. We will later use $X_i^{r}$ and $Y_i^{r}$ in our algorithm.

As $\delta \geq 1$, we assume that $\delta :=(1+\epsilon)$. The key idea is to determine the neighborhood of a vertex in $B_{i,3}$ and then progressively expand this neighborhood until we obtain sets $D_{i,1}^{j}$ and $E_{i,1}^{j}$ (where $E_{i,1}^{j}\supseteq D_{i,1}^{j}$) that satisfy Theorem~\textcolor{blue}{\ref{th4}}. Once this is achieved, we remove the current neighborhood and repeat the process for the remaining graph. Note that \( E_{i,1}^{j} \) is not a 2-distance subset, but \( D_{i,1}^{j} \) is. The complete procedure is described below (see the pseudocode in Algorithm~\textcolor{blue}{\ref{alg:01}$^{\ast}$}):
\begin{itemize}
    \item Initially, set $i \leftarrow 1$.
    \item \textbf{Stage 1:} Initialize $j \leftarrow 1$, and set $B_{i,1}^{j} \leftarrow B_{i,1}$, $B_{i,2}^{j} \leftarrow B_{i,2}$, and $B_{i,3}^{j} \leftarrow B_{i,3}$. 
    \item \textbf{Stage 2:} Choose an arbitrary vertex $v_i^{j}$ from $B_{i,1}^{j}$.
    \item For $r = 0,1,\dots$, consider the $r$-th neighborhood $\n _{i,j}^{r} [v_i^{j}, B_{i,3}^{j}]$.  
    Starting with $\n _{i,j}^{0} [v_i^{j},B_{i,3}^{j}]$ and compute the minimum selective subset while inequality~(\textcolor{blue}{\ref{eq1}}) holds.  
    \begin{equation}
        \lvert S^{min}(X_{i,j}^{r+2})\rvert > \delta \cdot \lvert S^{min}(X_{i,j}^{r})\rvert
        \label{eq1}
    \end{equation}
    Here $\n_{i,j}^{r}$ (rather than $\n_{i}^{r}$) denotes the $r$-th neighborhood used to compute $D_{i,1}^{j}$ and $E_{i,1}^{j}$ from $B_{i,3}$. The same convention applies to $X_{i,j}^{r}$ and $Y_{i,j}^{r}$.
    \item Let $\overline{r_{i,j}}$ be the smallest $r$ for which inequality~(\textcolor{blue}{\ref{eq1}}) is violated, i.e.,  
    \[
        \lvert S^{min}(X_{i,j}^{\overline{r_{i,j}}+2} )\rvert \leq \delta \cdot \lvert S^{min}(X_{i,j}^{\overline{r_{i,j}}})\rvert.
    \]
    \item Update the sets as follows:  
    \begin{align*}
        D_{i,1}^{j} &\leftarrow X_{i,j}^{\overline{r_{i,j}}},\\
        E_{i,1}^{j} &\leftarrow X_{i,j}^{\overline{r_{i,j}}+2},\\
        B_{i,3}^{j+1} &\leftarrow B_{i,3}^{j} \setminus \n _{i,j} ^{\overline{r_{i,j}}+2}[v_i^{j},B_{i,3}^{j}], \\  
        B_{i,1}^{j+1} &\leftarrow B_{i,1}^{j} \setminus X _{i,j} ^{\overline{r_{i,j}}+2}, \\  
        B_{i,2}^{j+1} &\leftarrow B_{i,2}^{j} \setminus Y _{i,j} ^{\overline{r_{i,j}}+2},\\  
        j &\leftarrow j+1.
    \end{align*}
    \item Repeat the process from \textbf{Stage 2} until $B_{i,3}^{j} = \emptyset$.
    \item Once $B_{i,3}^{j}$ becomes empty, set $i \leftarrow i+1$ and repeat the process from \textbf{Stage 1} until $i=k+1$.
\end{itemize}
Suppose the sets \( D_{i,1}^{1}, $ $ D_{i,1}^{2}, $ $ \dots, D_{i,1}^{t_i} \) and \( E_{i,1}^{1}, $ $E_{i,1}^{2}, $ $\dots, E_{i,1}^{t_i} \) are returned from the above algorithm for \( 1\leq i\leq k \). We establish the following lemmas.
\begin{lemma}\label{lemma7}$^{\ast}$
The sets $\{D_{i,1}^{1}, D_{i,1}^{2}, \dots, D_{i,1}^{t_i}\}$, where \( 1\leq i\leq k \), obtained from the above algorithm, form a collection of 2-distance subsets.
\end{lemma}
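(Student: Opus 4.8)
The plan is to verify, for an arbitrary block $B_i$, that the family $\{D_{i,1}^{1},\dots,D_{i,1}^{t_i}\}$ returned by the algorithm satisfies the three defining properties of a collection of $2$-distance subsets. The containment and connectivity properties come out directly from the construction, so the bulk of the work is the pairwise-separation property, and that is precisely where the two extra neighbourhood layers built into the deletion step (radius $\overline{r_{i,j}}+2$ removed, but only radius $\overline{r_{i,j}}$ kept for $D_{i,1}^{j}$) are consumed.

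For containment I would only note that $D_{i,1}^{j}=X_{i,j}^{\overline{r_{i,j}}}$ and that $X_{i,j}^{r}$ is, by definition, the part of $\n_{i,j}^{r}[v_i^{j},B_{i,3}^{j}]$ lying in $B_{i,1}$; hence $D_{i,1}^{j}\subseteq B_{i,1}$. For connectivity I would first observe that, unwinding the recursive definition of $\n_{i,j}^{r}[\cdot,\cdot]$, the set $\n_{i,j}^{\overline{r_{i,j}}}[v_i^{j},B_{i,3}^{j}]$ is exactly the closed ball of radius $\overline{r_{i,j}}$ around $v_i^{j}$ in the induced subgraph $G[B_{i,3}^{j}]$. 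That ball is connected, it contains $v_i^{j}$ and every vertex of $D_{i,1}^{j}$, and since $B_{i,3}^{j}\subseteq B_{i,3}$ the connecting paths stay inside $G[B_{i,3}]$; thus any two vertices of $D_{i,1}^{j}$ are joined by a path in $G[B_{i,3}]$.

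For separation, fix indices $j<l$ in the same block and write $N_j:=\n_{i,j}^{\overline{r_{i,j}}+2}[v_i^{j},B_{i,3}^{j}]$ for the set deleted at step $j$, i.e., the closed ball of radius $\overline{r_{i,j}}+2$ around $v_i^{j}$ in $G[B_{i,3}^{j}]$, whereas $D_{i,1}^{j}$ lies inside the concentric ball of radius $\overline{r_{i,j}}$. Since the residual sets only shrink, $D_{i,1}^{l}\subseteq B_{i,1}^{l}\subseteq B_{i,3}^{l}\subseteq B_{i,3}^{j+1}=B_{i,3}^{j}\setminus N_j$, so $D_{i,1}^{l}$ is disjoint from $N_j$. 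Any vertex within distance $2$ of $D_{i,1}^{j}$ in $G[B_{i,3}^{j}]$ is within distance $\overline{r_{i,j}}+2$ of $v_i^{j}$ and hence belongs to $N_j$; therefore no vertex of $D_{i,1}^{l}$ is within distance $2$ of $D_{i,1}^{j}$ in $G[B_{i,3}^{j}]$, i.e., $\dist_{G[B_{i,3}^{j}]}(D_{i,1}^{j},D_{i,1}^{l})>2$.

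The step I expect to be the main obstacle is promoting this last inequality from the shrunken graph $G[B_{i,3}^{j}]$ to $G[B_{i,3}]$ itself, because the definition of a $2$-distance collection requires distances in $G[B_{i,3}]$, and a short path there could pass through vertices deleted at earlier steps. My plan is to assume for contradiction that two distinct output sets are within distance $2$ in $G[B_{i,3}]$ and, among all witnessing pairs and realizing paths, choose a path $P=(a,\dots,b)$ with $a\in D_{i,1}^{j}$ and $b\in D_{i,1}^{l}$ that minimizes the earliest deletion step $j_0:=\min\{\iota(w):w\in V(P)\}$, where $\iota(w)$ is the step at which $w$ leaves the residual set. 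No vertex of $P$ is deleted before step $j_0$, so $P$ is a path of length at most $2$ inside $G[B_{i,3}^{j_0}]$. If an endpoint, say $a$, realizes $j_0$, then $\iota(a)=j_0$; since $a\in D_{i,1}^{j}$ is deleted at step $j$ we get $j=j_0$ and $a\in D_{i,1}^{j_0}\subseteq N_{j_0}$, so the separation argument above, applied at step $j_0$, forces $b\in N_{j_0}$ too; hence $b$ is deleted at step $j_0$, giving $l=j_0=j$, a contradiction. Otherwise the minimizer is an internal vertex $w_0\in N_{j_0}$ whose two $P$-neighbours both lie outside $N_{j_0}$, which pins $w_0$ to the outermost layer of that ball; here I would re-route $P$ through $N_{j_0}$ (or replace $w_0$ by a neighbour closer to $v_i^{j_0}$) to obtain a witnessing path with strictly smaller earliest deletion step, contradicting minimality. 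This re-routing is the delicate point of the whole argument, and it is exactly where a buffer of width \emph{two} rather than one is needed, so that a deleted vertex bridging two $D$-regions can always be absorbed back into the region that removed it.
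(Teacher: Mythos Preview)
Your containment and connectivity arguments are fine, and you are right that the crux is promoting the separation inequality from $G[B_{i,3}^{j}]$ to $G[B_{i,3}]$. The paper argues by induction, carrying the invariant that $\{D_{i,1}^{1},\dots,D_{i,1}^{j-1},B_{i,1}^{j}\}$ is pairwise $2$-separated in $G[B_{i,3}]$; in the inductive step it simply records $B_{i,3}^{j+1}=B_{i,3}^{j}\setminus \n_{i,j}^{\,\overline{r_{i,j}}+2}[v_i^{j},B_{i,3}^{j}]$ and stops, so it never explains why a length-$2$ path in $G[B_{i,3}]$ between $D_{i,1}^{j}$ and $B_{i,1}^{j+1}$ cannot pass through a vertex deleted at some earlier step $j'<j$. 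You have located exactly the point the paper leaves unjustified.

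The gap in your own plan is the ``internal vertex'' case of the minimal-counterexample argument. You propose to re-route $P$ through $N_{j_0}$, or to replace $w_0$ by a neighbour closer to $v_i^{j_0}$, so as to lower the earliest-deletion index; but every vertex of $N_{j_0}$ lies in $B_{i,3}^{j_0}$ and is therefore deleted \emph{exactly} at step $j_0$, so no such replacement can make $j_0$ smaller. Nor is there an evident way to manufacture a new witnessing pair with a smaller index: the endpoints $a,b$ sit at distance $\overline{r_{i,j_0}}+3$ from $v_i^{j_0}$, hence at distance at least $3$ from $D_{i,1}^{j_0}$. The obstruction is not merely technical. Take $B_{i,3}$ on $\{v,x,w,a,b\}$ with edges $vx,\,xw,\,wa,\,wb$ (all five in $B_{i,1}$); starting from $v$ with $\overline{r}=0$ the algorithm removes $\{v,x,w\}$, then from $a$ (again $\overline{r}=0$, since $a$ and $b$ are non-adjacent in the residual graph) removes $\{a\}$, and finally outputs $D^{1}=\{v\}$, $D^{2}=\{a\}$, $D^{3}=\{b\}$, yet $\dist_{G[B_{i,3}]}(a,b)=2$ through $w$. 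So the re-routing cannot be made to work in general, and the lemma as written---distances measured in $G[B_{i,3}]$, while the algorithm's balls are taken in the shrinking $G[B_{i,3}^{j}]$---appears to need either a tweak to the deletion rule (e.g.\ deleting the full $G[B_{i,3}]$-ball of radius $\overline{r_{i,j}}+2$) or a correspondingly adjusted notion of $2$-distance subsets.
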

\begin{lemma}\label{lemma8}$^{\ast}$
For the collection of sets \( \{E_{i,1}^{1},\dots, E_{i,1}^{t_i}\} \) obtained from the above algorithm, the union $S=\bigcup_{i=1}^{k}\bigcup_{j=1}^{t_i} S^{min}(E_{i,1}^{j})$
forms a selective subset of \( G \).
\end{lemma}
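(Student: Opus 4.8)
The plan is to prove the statement blockwise and then appeal to Theorem~\ref{th3}. Fix a block $B_i$ and write $S_i := \bigcup_{j=1}^{t_i} S^{min}(E_{i,1}^{j})$. I would show that each $S_i$ is a selective subset of $B_{i,1}$, i.e.\ that $S_i\subseteq B_{i,3}$ and that every $v\in B_{i,1}$ either lies in $S_i$ or satisfies $\n(v,B_{i,3})\cap S_i\neq\emptyset$. Granting this, $S=\bigcup_{i=1}^{k} S_i$ is a selective subset of $G$ by Theorem~\ref{th3} together with the block-independence statement (Remark~\ref{obs3}).

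The first step is to record the loop invariant maintained by the algorithm of Section~\ref{ptas}: after the $j$-th update, $B_{i,1}^{j}=B_{i,3}^{j}\cap B_{i,1}$ and $B_{i,2}^{j}=B_{i,3}^{j}\cap B_{i,2}$. This follows by induction, since each update deletes the ball $\n_{i,j}^{\overline{r_{i,j}}+2}[v_i^{j},B_{i,3}^{j}]$ from $B_{i,3}^{j}$ while deleting exactly its $B_{i,1}$-part $X_{i,j}^{\overline{r_{i,j}}+2}$ and its $B_{i,2}$-part $Y_{i,j}^{\overline{r_{i,j}}+2}$ from $B_{i,1}^{j}$ and $B_{i,2}^{j}$. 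Because $v_i^{j}\in\n_{i,j}^{0}[v_i^{j},B_{i,3}^{j}]$ is removed at every round, $|B_{i,3}^{j}|$ strictly decreases, so the loop terminates; and since the sets $E_{i,1}^{j}=X_{i,j}^{\overline{r_{i,j}}+2}$ are precisely the successive $B_{i,1}$-chunks removed from $B_{i,1}^{1}=B_{i,1}$, they partition $B_{i,1}$, in particular $\bigcup_{j=1}^{t_i}E_{i,1}^{j}=B_{i,1}$. (If $B_{i,1}^{j}$ becomes empty before $B_{i,3}^{j}$ does, the remaining vertices lie in $B_{i,2}$ and impose no selective-subset constraint, so one may terminate the block as soon as $B_{i,1}^{j}=\emptyset$; I would state the stopping rule accordingly.)

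Next, for any $v\in B_{i,1}$ I would pick the index $j$ with $v\in E_{i,1}^{j}$ guaranteed by this covering property. Since $S^{min}(E_{i,1}^{j})$ is a local selective subset of $E_{i,1}^{j}$ and, by Lemmas~\ref{lemma2} and~\ref{lemma3}, a subset of $B_{i,3}$, its defining property gives either $v\in S^{min}(E_{i,1}^{j})\subseteq S_i$, or $\n(v,B_{i,3})\cap S^{min}(E_{i,1}^{j})\neq\emptyset$ and hence $\n(v,B_{i,3})\cap S_i\neq\emptyset$. Combined with $S_i\subseteq B_{i,3}$, this is exactly the condition for $S_i$ to be a selective subset of $B_{i,1}$, which finishes the blockwise claim and therefore the lemma.

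I expect the main obstacle to be the covering property $\bigcup_{j}E_{i,1}^{j}=B_{i,1}$, that is, the bookkeeping showing the algorithm's ball-deletions keep $B_{i,1}^{j},B_{i,2}^{j},B_{i,3}^{j}$ in the stated relationship and never leave a vertex of $B_{i,1}$ outside every $E_{i,1}^{j}$ (together with the degenerate case noted above). Once that invariant is in hand, the rest is a direct unfolding of the definitions of local and blockwise selective subsets and an invocation of Theorem~\ref{th3}.
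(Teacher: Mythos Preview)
Your proposal is correct and follows essentially the same route as the paper: establish the covering property $\bigcup_{j=1}^{t_i} E_{i,1}^{j}=B_{i,1}$ from the update rule $B_{i,1}^{j+1}=B_{i,1}^{j}\setminus E_{i,1}^{j}$, conclude that $\bigcup_{j} S^{\min}(E_{i,1}^{j})$ is a selective subset of $B_{i,1}$, and then invoke Theorem~\ref{th3}. Your version is more careful than the paper's (you spell out the loop invariant, termination, and the degenerate case $B_{i,1}^{j}=\emptyset$ while $B_{i,3}^{j}\neq\emptyset$), but the underlying argument is identical.
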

Combining Lemmas~\textcolor{blue}{\ref{lemma7}} and~\textcolor{blue}{\ref{lemma8}} with Theorem~\textcolor{blue}{\ref{th4}}, we obtain the following theorem directly:
\begin{theorem}\label{th5}
The above algorithm produces a selective subset $\bigcup_{i=1}^{k}$ $\bigcup_{j=1}^{t_i} $ $ S^{min}($ $E_{i,1}^{j})$
of size at most \( (1+\epsilon) \) times the size of the minimum selective subset of \( G \).
\end{theorem}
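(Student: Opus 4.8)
The plan is to verify that the algorithm described just above the statement satisfies all the hypotheses of Theorem~\ref{th4}, so that Theorem~\ref{th5} follows as an immediate corollary. Concretely, Theorem~\ref{th4} requires three things: (i) the sets $D_{i,1}^{j}$ form a collection of $2$-distance subsets; (ii) there is a uniform bound $\delta = 1+\epsilon$ with $\lvert S^{min}(E_{i,1}^{j})\rvert \leq \delta\cdot\lvert S^{min}(D_{i,1}^{j})\rvert$ for every $i,j$; and (iii) $\bigcup_{i,j} S^{min}(E_{i,1}^{j})$ is a selective subset of $G$. Items (i) and (iii) are exactly Lemmas~\ref{lemma7} and~\ref{lemma8}, which we are allowed to assume, so the only remaining work is item (ii), together with the bookkeeping that assembles these pieces.

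For item (ii), I would argue as follows. By construction, $D_{i,1}^{j} = X_{i,j}^{\overline{r_{i,j}}}$ and $E_{i,1}^{j} = X_{i,j}^{\overline{r_{i,j}}+2}$, and $\overline{r_{i,j}}$ is precisely defined as the first index $r$ at which inequality~(\ref{eq1}) fails, i.e. $\lvert S^{min}(X_{i,j}^{\overline{r_{i,j}}+2})\rvert \leq \delta\cdot\lvert S^{min}(X_{i,j}^{\overline{r_{i,j}}})\rvert$. Substituting the definitions of $D_{i,1}^{j}$ and $E_{i,1}^{j}$ into this inequality gives $\lvert S^{min}(E_{i,1}^{j})\rvert \leq \delta\cdot\lvert S^{min}(D_{i,1}^{j})\rvert$ directly. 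The one subtlety worth a sentence is that such an $\overline{r_{i,j}}$ always exists and the loop terminates: the sequence $X_{i,j}^{0}\subseteq X_{i,j}^{2}\subseteq X_{i,j}^{4}\subseteq\cdots$ is nondecreasing and bounded by the finite set $B_{i,1}$, so $\lvert S^{min}(X_{i,j}^{r})\rvert$ is a bounded nondecreasing integer sequence; it cannot strictly increase by a factor $\delta>1$ at every even step forever, hence~(\ref{eq1}) must eventually be violated. (If $B_{i,3}^{j}$ is exhausted before that happens, the relevant set is the whole remaining block and the bound holds trivially with the local solution being the full $B_{i,3}$-part, which one should note.) This establishes (ii) with the explicit constant $\delta = 1+\epsilon$.

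With (i), (ii), (iii) in hand, Theorem~\ref{th4} applied with $\delta = 1+\epsilon$ yields that $S = \bigcup_{i=1}^{k}\bigcup_{j=1}^{t_i} S^{min}(E_{i,1}^{j})$ is a $(1+\epsilon)$-approximation of a minimum selective subset of $G$, which together with Theorem~\ref{th3} (reducing the global problem to the blockwise one) is exactly the assertion of Theorem~\ref{th5}. I would also remark that the statement as phrased is about the size guarantee only, so no running-time analysis is needed here — that presumably belongs to a separate theorem establishing the PTAS claim.

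The main obstacle is not conceptual but lies entirely in item (ii), and specifically in being careful about the boundary/termination case: one must confirm that when the expansion reaches the edge of $B_{i,3}^{j}$ the stopping rule still produces sets $D_{i,1}^{j}\subseteq E_{i,1}^{j}\subseteq B_{i,1}$ for which the factor-$\delta$ inequality holds (here it holds with equality of the two local solutions, or trivially), and that the removal step $B_{i,3}^{j+1}\leftarrow B_{i,3}^{j}\setminus \n_{i,j}^{\overline{r_{i,j}}+2}[v_i^{j},B_{i,3}^{j}]$ genuinely decreases $\lvert B_{i,3}^{j}\rvert$ so the outer loop over $j$ terminates. Everything else is a direct chaining of the previously stated lemmas and Theorem~\ref{th4}.
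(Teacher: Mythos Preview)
Your proposal is correct and takes essentially the same approach as the paper: the paper derives Theorem~\ref{th5} directly by combining Lemmas~\ref{lemma7} and~\ref{lemma8} with Theorem~\ref{th4}, relying on the stopping condition of the algorithm for the factor-$\delta$ bound in item (ii). Your additional remarks on termination of the inner loop and the boundary case are extra care not spelled out in the paper, and the appeal to Theorem~\ref{th3} is unnecessary since Theorem~\ref{th4} already delivers the global approximation guarantee.
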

Until now, we have considered general graphs rather than unit disk graphs. Thus, Theorem~\ref{th5} holds for general graphs.
\subsection{Finding $S^{min}(E_{i,1}^{j})$ on Unit Disk Graphs}
The only remaining task is to compute $S^{min}(E_{i,1}^{j})$ in time $n ^ {f(\epsilon)}$ on unit disk graphs. We assume that \( F_{i,j} = \n _{i,j} ^{\overline{r_{i,j}}+2}[v_i^{j},B_{i,3}^{j}] \). According to the above algorithm,  
\[
F_{i,j} = X_{i,j}^{\overline{r_{i,j}}+2} \cup Y_{i,j}^{\overline{r_{i,j}}+2} = E_{i,1}^{j} \cup Y_{i,j}^{\overline{r_{i,j}}+2}.
\]
We first show that the size of \( S^{min}(E_{i,1}^{j}) \) is at most the size of the \textsc{Maximum Independent Set} of \( F_{i,j} \). This provides a bound on \( E_{i,1}^{j} \). The \textsc{Maximum Independent Set} of a graph \( G \) is the largest set of vertices such that no two vertices in the set are adjacent. One might think that \( S^{min}(E_{i,1}^{j}) \) is the same as a \textsc{Minimum Dominating Set} of \( F_{i,j} \). However, by Theorem~\textcolor{blue}{\ref{th3}}, some vertices in \( B_{i,3} \setminus B_{i,1} \) may have no adjacent vertex (including themselves) in \( S(B_{i,1}) \). Therefore, with this bound, it suffices to compute each \( S^{min}(E_{i,1}^{j}) \).
\begin{lemma}\label{lemma10}$^{\ast}$
The size of \( S^{min}(E_{i,1}^{j}) \) is at most the size of the maximum independent set of \( F_{i,j} \).
\end{lemma}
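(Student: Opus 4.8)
The plan is to produce, for the sub-instance $E_{i,1}^{j}$, an explicit local selective subset whose size is bounded by the independence number of $F_{i,j}$; since $S^{min}(E_{i,1}^{j})$ is by definition a \emph{smallest} local selective subset of $E_{i,1}^{j}$, this settles the claim. The construction I would use is the textbook fact that a maximal independent set is dominating, adapted to the present setting.

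Concretely, first pick a maximal (inclusion-wise) independent set $I$ of the induced subgraph $G[E_{i,1}^{j}]$; such an $I$ can be obtained greedily. Next, I would check that $I$ is a local selective subset of $E_{i,1}^{j}$: by the algorithm, $E_{i,1}^{j}=X_{i,j}^{\overline{r_{i,j}}+2}\subseteq B_{i,1}\subseteq B_{i,3}$, so $I\subseteq B_{i,3}$ and $I$ is an admissible candidate (indeed, by Lemmas~\ref{lemma2} and~\ref{lemma3} one may restrict attention to local selective subsets lying in $\n[E_{i,1}^{j},B_{i,3}]$, and $I\subseteq E_{i,1}^{j}$ certainly lies there). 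For the covering condition, take any $v\in E_{i,1}^{j}$: if $v\in I$ there is nothing to do, and otherwise maximality of $I$ in $G[E_{i,1}^{j}]$ yields a neighbour $u\in I$ of $v$ with $u\in E_{i,1}^{j}$; since $E_{i,1}^{j}\subseteq B_{i,3}$, this $u$ witnesses $\n(v,B_{i,3})\cap I\neq\emptyset$. Hence $I$ is a local selective subset, and therefore $\lvert S^{min}(E_{i,1}^{j})\rvert\le\lvert I\rvert$.

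Finally, I would bound $\lvert I\rvert$. Since $F_{i,j}=X_{i,j}^{\overline{r_{i,j}}+2}\cup Y_{i,j}^{\overline{r_{i,j}}+2}=E_{i,1}^{j}\cup Y_{i,j}^{\overline{r_{i,j}}+2}$, we have $E_{i,1}^{j}\subseteq F_{i,j}$, so $G[E_{i,1}^{j}]$ is an induced subgraph of $G[F_{i,j}]$ and the independent set $I$ of $G[E_{i,1}^{j}]$ is, a fortiori, an independent set of $G[F_{i,j}]$. Consequently $\lvert I\rvert$ is at most the size of a maximum independent set of $F_{i,j}$, and chaining the two inequalities gives $\lvert S^{min}(E_{i,1}^{j})\rvert\le\lvert I\rvert\le$ (size of a maximum independent set of $F_{i,j}$), as required.

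I do not anticipate a genuine obstacle here: the only point demanding care is definitional — a local selective subset must dominate $E_{i,1}^{j}$ using neighbours inside $B_{i,3}$, not merely neighbours inside $E_{i,1}^{j}$ — but this is immediate from $E_{i,1}^{j}\subseteq B_{i,3}$. Note that the argument uses no geometry at all: the statement holds for arbitrary graphs, and the unit-disk hypothesis will only be invoked afterwards, when this cardinality bound is combined with the fact that $F_{i,j}$ lies in a bounded region of the plane to compute $S^{min}(E_{i,1}^{j})$ in time $n^{f(\epsilon)}$.
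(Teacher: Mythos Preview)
Your proof is correct and follows essentially the same idea as the paper: both arguments exploit the fact that a (maximal) independent set dominates, so it serves as a local selective subset, and then bound its cardinality by the independence number of $F_{i,j}$. The only cosmetic difference is that the paper routes the bound through $\gamma(F_{i,j})$ (arguing $\lvert S^{min}(E_{i,1}^{j})\rvert\le\gamma(F_{i,j})\le\alpha(F_{i,j})$), whereas you construct a maximal independent set of the smaller set $E_{i,1}^{j}$ directly and then observe it is also independent in $F_{i,j}$; both are valid and equally elementary.
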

Now, we apply the method described in~\textcolor{blue}{\cite{independent}} to find a \textsc{Maximum Independent Set} in \( F_{i,j} \) for unit disk graphs.
\begin{lemma}\label{lemma11}$^{\ast}$
For any unit disk graph \( U \) and an independent set \( I^{r} \subseteq \n_{i,j} ^{r}[v_i^{j},$ $ B_{i,3}^{j}] \), we have $\lvert I^{r} \rvert=(2r+1)^2 = \mathcal{O}(r^2).$
\end{lemma}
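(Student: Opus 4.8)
The plan is to bound the size of an independent set inside the $r$-th neighborhood $\n_{i,j}^{r}[v_i^{j}, B_{i,3}^{j}]$ of a vertex $v_i^{j}$ in a unit disk graph by a packing argument in the plane. First I would fix a geometric representation $f$ of the unit disk graph $U$ (which exists even though the algorithm does not use it — the lemma is a purely combinatorial statement whose proof may invoke the representation). The key geometric observation is that if $u$ and $w$ are adjacent in $U$, then $\|f(u)-f(w)\|\le 2$, so any vertex $u$ at hop-distance at most $r$ from $v_i^{j}$ satisfies $\|f(u)-f(v_i^{j})\|\le 2r$; hence every vertex of $\n_{i,j}^{r}[v_i^{j},B_{i,3}^{j}]$ has its center inside the disk $\D$ of radius $2r$ centered at $f(v_i^{j})$.

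Next I would exploit independence: if $u,w\in I^{r}$ with $u\ne w$, then $(u,w)\notin E(U)$, so $\|f(u)-f(w)\|>2$. Thus the open disks of radius $1$ centered at the points $\{f(u):u\in I^{r}\}$ are pairwise disjoint, and all of them are contained in the disk of radius $2r+1$ concentric with $\D$. Comparing areas gives $\lvert I^{r}\rvert \cdot \pi \le \pi (2r+1)^2$, i.e. $\lvert I^{r}\rvert \le (2r+1)^2$. This already yields the $\mathcal{O}(r^2)$ bound claimed; the exact constant $(2r+1)^2$ follows from the same area inequality, and I would present it in that form to match the statement (noting that this is a standard area-packing bound and any refinements only improve the constant).

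The only genuinely delicate point — and the one I would flag as the main obstacle — is the discrepancy between the combinatorial hop-distance in $U$ and the Euclidean distance in the representation: two vertices can be geometrically close yet far in hop-distance, so the inclusion "centers of $\n_{i,j}^{r}[v_i^{j},B_{i,3}^{j}]$ lie in a radius-$2r$ disk" relies on walking along a shortest path of length $\le r$ and applying the triangle inequality $r$ times, each step contributing at most $2$ to the Euclidean displacement. One must also be careful that the neighborhoods $\n_{i,j}^{r}$ are taken inside the induced subgraph $G[B_{i,3}^{j}]$, but that only restricts which vertices are reached, so hop-distance in $G[B_{i,3}^{j}]$ dominates hop-distance in $U$ and the containment still holds. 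With these two observations in place the area-packing computation is routine, and the bound $\lvert I^{r}\rvert=(2r+1)^2=\mathcal{O}(r^2)$ follows.
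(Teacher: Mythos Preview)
Your proposal is correct and follows essentially the same area-packing argument as the paper: bound the centers of $\n_{i,j}^{r}[v_i^{j},B_{i,3}^{j}]$ inside a disk of radius $2r$ around $f(v_i^{j})$, note that independence makes the unit disks around the points of $I^{r}$ pairwise disjoint inside the radius-$(2r+1)$ disk, and compare areas to get $\lvert I^{r}\rvert \le (2r+1)^2$. If anything, your write-up is more careful than the paper's own proof, since you explicitly justify the Euclidean bound via the triangle inequality along a shortest path and address the induced-subgraph subtlety.
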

Using Lemmas~\textcolor{blue}{\ref{lemma10}} and~\textcolor{blue}{\ref{lemma11}}, we derive the following theorem directly:
\begin{theorem}\label{th6}
The size of the minimum selective subset (local) of $E_{i,1}^{j}$ satisfies
\[
\lvert S^{\min}(E_{i,1}^{j}) \rvert = \mathcal{O}(r^2).
\]
\end{theorem}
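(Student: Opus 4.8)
The plan is to chain together the two immediately preceding results, Lemma~\ref{lemma10} and Lemma~\ref{lemma11}, with the termination criterion of the algorithm encoded in inequality~(\ref{eq1}). First I would recall that $F_{i,j} = \n_{i,j}^{\overline{r_{i,j}}+2}[v_i^{j},B_{i,3}^{j}]$ is exactly the $(\overline{r_{i,j}}+2)$-th neighborhood of the single vertex $v_i^{j}$ inside $B_{i,3}^{j}$, which is an induced subgraph of a unit disk graph and hence itself a unit disk graph. By Lemma~\ref{lemma10}, $\lvert S^{\min}(E_{i,1}^{j})\rvert$ is bounded by the size of a maximum independent set of $F_{i,j}$, and by Lemma~\ref{lemma11} any independent set contained in the $r$-th neighborhood of $v_i^{j}$ has size at most $(2r+1)^2 = \mathcal{O}(r^2)$. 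Applying this with $r = \overline{r_{i,j}}+2$ gives $\lvert S^{\min}(E_{i,1}^{j})\rvert \le (2(\overline{r_{i,j}}+2)+1)^2 = \mathcal{O}(\overline{r_{i,j}}^{\,2})$, so the stated $\mathcal{O}(r^2)$ bound holds with $r = \overline{r_{i,j}}$.

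The only subtlety is to make sure that $\overline{r_{i,j}}$ is genuinely the ``$r$'' referred to in the theorem statement, i.e. that writing $\mathcal{O}(r^2)$ is meaningful rather than an artifact of an unbounded index. Here I would invoke the stopping rule: $\overline{r_{i,j}}$ is the \emph{smallest} $r$ for which inequality~(\ref{eq1}) fails, meaning that for every $r < \overline{r_{i,j}}$ we have $\lvert S^{\min}(X_{i,j}^{r+2})\rvert > \delta\cdot\lvert S^{\min}(X_{i,j}^{r})\rvert$ with $\delta = 1+\epsilon > 1$. Since $X_{i,j}^{0}=\{v_i^{j}\}$ forces $\lvert S^{\min}(X_{i,j}^{0})\rvert \ge 1$ (a single vertex needs at least itself or a neighbour chosen), iterating the strict inequality along the even (or odd) subsequence of radii shows $\lvert S^{\min}(X_{i,j}^{\overline{r_{i,j}}})\rvert \ge (1+\epsilon)^{\lfloor \overline{r_{i,j}}/2\rfloor}$. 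On the other hand that quantity is at most $\lvert S^{\min}\rvert \le n$ by Lemma~\ref{lemma5} (or simply $\le n$ trivially), so $\overline{r_{i,j}} = \mathcal{O}(\epsilon^{-1}\log n)$. Consequently $r = \overline{r_{i,j}}$ is bounded in terms of $n$ and $\epsilon$ only, which is precisely what makes the $\mathcal{O}(r^2)$ statement, and the subsequent claim of an $n^{f(\epsilon)}$ running time, meaningful.

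The step I expect to require the most care is not the inequality chaining but the verification that $F_{i,j}$ is an induced unit disk subgraph on which Lemma~\ref{lemma11} legitimately applies, and that the maximum independent set bound of Lemma~\ref{lemma10} is taken over exactly this $F_{i,j}$ and not over $E_{i,1}^{j}$ alone. Since $E_{i,1}^{j} = X_{i,j}^{\overline{r_{i,j}}+2} \subseteq F_{i,j}$ and $F_{i,j}$ is the $(\overline{r_{i,j}}+2)$-neighborhood of a single vertex, any independent set in $F_{i,j}$ — in particular a maximum one — is an independent set contained in $\n_{i,j}^{\overline{r_{i,j}}+2}[v_i^{j},B_{i,3}^{j}]$, so Lemma~\ref{lemma11} applies verbatim with $r=\overline{r_{i,j}}+2$. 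Putting the three estimates together yields $\lvert S^{\min}(E_{i,1}^{j})\rvert = \mathcal{O}((\overline{r_{i,j}}+2)^2) = \mathcal{O}(\overline{r_{i,j}}^{\,2})$, which is the claimed bound, and combined with the previous paragraph's estimate $\overline{r_{i,j}} = \mathcal{O}(\epsilon^{-1}\log n)$ it also shows that each $S^{\min}(E_{i,1}^{j})$ can be found by brute force over subsets of $F_{i,j}$ of size $\mathcal{O}(r^2)$ in time $n^{\mathcal{O}(r^2)} = n^{f(\epsilon)}$, completing the \ptas.
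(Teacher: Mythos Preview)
Your first paragraph is exactly the paper's proof of Theorem~\ref{th6}: the paper states that the theorem follows ``directly'' from Lemma~\ref{lemma10} (which bounds $\lvert S^{\min}(E_{i,1}^{j})\rvert$ by $\alpha(F_{i,j})$) and Lemma~\ref{lemma11} (which bounds any independent set in the $r$-th neighbourhood by $(2r+1)^2$), applied with $r=\overline{r_{i,j}}+2$. That part is correct and matches the paper precisely.

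The material you add in the second and third paragraphs goes beyond Theorem~\ref{th6} --- it belongs to Lemma~\ref{lemma20} --- and there it contains a real gap. Your bound $\overline{r_{i,j}}=\mathcal{O}(\epsilon^{-1}\log n)$, obtained by comparing $(1+\epsilon)^{\lfloor\overline{r_{i,j}}/2\rfloor}$ against the trivial cap $n$, does \emph{not} yield a running time of $n^{f(\epsilon)}$: plugging it into $n^{\mathcal{O}(r^2)}$ gives $n^{\mathcal{O}(\epsilon^{-2}\log^2 n)}$, which is superpolynomial in $n$ and therefore not a \ptas. The paper avoids this by a different comparison: it pits the exponential lower bound $(1+\epsilon)^{r/2}$ not against $n$ but against the polynomial upper bound $(2r+1)^2$ coming from Theorem~\ref{th6} itself. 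Since $(2r+1)^2 < (1+\epsilon)^{r/2}$ must eventually fail, one obtains $\overline{r_{i,j}}\le d(\epsilon)$ with $d(\epsilon)=\mathcal{O}(\epsilon^{-2}\log(1/\epsilon))$ \emph{independent of $n$}, and only then does the brute-force enumeration run in $n^{\mathcal{O}(d^2)}=n^{f(\epsilon)}$. So your proof of Theorem~\ref{th6} is fine, but your concluding claim about the \ptas running time needs the $n$-free bound on $\overline{r_{i,j}}$, which your argument does not supply.
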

\begin{lemma}\label{lemma20}$^{\ast}$
There exists a constant \( d(\delta) \), depending on \( \delta = (1+\epsilon) \), such that \( \overline{r_{i,j}} \leq d(\delta) \). The running time of our algorithm to compute a \ptas is \( \mathcal{O}(n^{d^2}) \), where \( |V(U)| = n \), \( d(\epsilon) = \mathcal{O}\Big(\frac{1}{\epsilon^2} \log \frac{1}{\epsilon}\Big) \), and \( 0 < \epsilon < \frac{1}{10} \).
\end{lemma}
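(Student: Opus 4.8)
The plan is to bound $\overline{r_{i,j}}$ by purely counting-theoretic reasoning, then translate that bound into a running-time estimate. First I would observe that, by construction, $\overline{r_{i,j}}$ is the first index $r$ at which inequality~\eqref{eq1} fails, so for every even step $r = 0, 2, 4, \dots$ strictly below $\overline{r_{i,j}}$ we have $\lvert S^{\min}(X_{i,j}^{r+2})\rvert > \delta \cdot \lvert S^{\min}(X_{i,j}^{r})\rvert = (1+\epsilon)\lvert S^{\min}(X_{i,j}^{r})\rvert$. Since $X_{i,j}^{0} = \{v_i^{j}\}$ and $\lvert S^{\min}(X_{i,j}^{0})\rvert \geq 1$, iterating this over roughly $\overline{r_{i,j}}/2$ doublings-by-$(1+\epsilon)$ gives $\lvert S^{\min}(X_{i,j}^{\overline{r_{i,j}}})\rvert \geq (1+\epsilon)^{\lfloor \overline{r_{i,j}}/2 \rfloor}$. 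On the other hand, $X_{i,j}^{\overline{r_{i,j}}} \subseteq E_{i,1}^{j}$ and, by Theorem~\ref{th6} (which rests on Lemmas~\ref{lemma10} and~\ref{lemma11}), $\lvert S^{\min}(E_{i,1}^{j})\rvert = \mathcal{O}(r^2)$ where $r = \overline{r_{i,j}} + 2$; more precisely Lemma~\ref{lemma11} gives the explicit bound $(2r+1)^2$. Chaining these, $(1+\epsilon)^{\lfloor \overline{r_{i,j}}/2\rfloor} \leq (2(\overline{r_{i,j}}+2)+1)^2$, i.e. an exponential in $\overline{r_{i,j}}$ is at most a polynomial in $\overline{r_{i,j}}$, which forces $\overline{r_{i,j}}$ to be bounded by a constant depending only on $\epsilon$.

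The second step is to solve this transcendental inequality for the explicit form $d(\epsilon) = \mathcal{O}\!\big(\tfrac{1}{\epsilon^2}\log\tfrac{1}{\epsilon}\big)$. Taking logarithms of $(1+\epsilon)^{r/2} \leq (2r+5)^2$ gives $\tfrac{r}{2}\ln(1+\epsilon) \leq 2\ln(2r+5)$, and using $\ln(1+\epsilon) \geq \epsilon/2$ for $0 < \epsilon < \tfrac{1}{10}$ yields $r\epsilon/4 \leq 2\ln(2r+5)$. A standard fact about the inequality $r \leq c\ln r$ is that it implies $r = \mathcal{O}(c\log c)$; applying this with $c = \Theta(1/\epsilon)$ after absorbing constants gives $r = \mathcal{O}\!\big(\tfrac{1}{\epsilon}\log\tfrac{1}{\epsilon}\big)$. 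I should double-check whether the stated bound is $\tfrac{1}{\epsilon^2}$ or $\tfrac{1}{\epsilon}$; the paper writes $\tfrac{1}{\epsilon^2}\log\tfrac{1}{\epsilon}$, so either there is an additional factor of $1/\epsilon$ lost somewhere in the chain (plausibly because the iteration advances $r$ by $2$ each time and one may want to bound the number of neighborhoods rather than the radius, or because $\delta$ enters as $(1+\epsilon)$ but the per-step gain is effectively only a $1/\epsilon$-th root), and the safe route is to carry the weaker bound $d(\epsilon) = \mathcal{O}(\tfrac{1}{\epsilon^2}\log\tfrac{1}{\epsilon})$, which still follows a fortiori from the computation above. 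Either way, set $d = d(\epsilon)$ to be this constant, so that $\overline{r_{i,j}} \leq d$ for all $i,j$.

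For the running time, the dominant cost is computing each $S^{\min}(E_{i,1}^{j})$, and more generally each $S^{\min}(X_{i,j}^{r})$ for $r$ up to $\overline{r_{i,j}}+2 \leq d+2$. By Theorem~\ref{th6}, the target set $S^{\min}(E_{i,1}^{j})$ has size $\mathcal{O}(d^2)$, so a brute-force search over all subsets of $F_{i,j}$ of size $\mathcal{O}(d^2)$ — there are at most $\n^{\mathcal{O}(d^2)}$ of them — checking each for the (local) selective-subset property, computes $S^{\min}(E_{i,1}^{j})$ in time $n^{\mathcal{O}(d^2)}$. The sizes of $B_{i,1}^{j}$ strictly decrease between iterations and each block is processed once, so the total number of $(i,j)$ pairs is $\mathcal{O}(n)$; multiplying gives overall running time $\mathcal{O}(n \cdot n^{\mathcal{O}(d^2)}) = \mathcal{O}(n^{d^2})$ after absorbing constants into the exponent, with $d = d(\epsilon) = \mathcal{O}(\tfrac{1}{\epsilon^2}\log\tfrac{1}{\epsilon})$. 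Combined with Theorem~\ref{th5}, this establishes that the algorithm is a genuine \ptas on unit disk graphs.

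The main obstacle I anticipate is nailing down the exact constant in $d(\epsilon)$ — in particular, justifying the $\tfrac{1}{\epsilon^2}$ rather than $\tfrac{1}{\epsilon}$ and making the transcendental-inequality manipulation fully rigorous for the restricted range $0 < \epsilon < \tfrac{1}{10}$ (where the bounds $\ln(1+\epsilon) \geq \epsilon/2$ and the crude estimates on $\ln(2r+5)$ are valid). Everything else — the telescoping lower bound on $\lvert S^{\min}(X_{i,j}^{r})\rvert$, the upper bound from maximum independent sets via Lemma~\ref{lemma11}, and the brute-force enumeration cost — is routine once the constant is in hand.
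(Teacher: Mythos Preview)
Your approach is essentially the same as the paper's: telescope inequality~(\ref{eq1}) to get $(1+\epsilon)^{r/2}$ growth, cap it by the $(2r+1)^2$ independent-set bound from Lemma~\ref{lemma11}, solve the resulting transcendental inequality, and then brute-force over subsets of size $\mathcal{O}(d^2)$ for the running time. Your uncertainty about $1/\epsilon$ versus $1/\epsilon^2$ is well-founded --- the paper reaches $1/\epsilon^2$ only through its particular algebraic substitution (plugging $d=\tfrac{1}{\epsilon^2}\log\tfrac{1}{\epsilon}$ into $2\log(3d) < d\log(1+\epsilon)$ and using $\log(1+\epsilon)\geq \epsilon-\epsilon^2$), and your tighter $\mathcal{O}(\tfrac{1}{\epsilon}\log\tfrac{1}{\epsilon})$ is in fact correct and simply subsumed by the stated weaker bound.
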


\section{APX-hardness of \mss on Circle Graphs}\label{apxhard}
The vertex set of a circle graph is a set of chords of a given circle, and if two chords intersect, the corresponding vertices share an edge. We obtain a ``gap-preserving'' reduction from the \maxsat problem to a circle graph using the \textsc{Minimum Dominating Set} (\mds) problem on circle graphs. The \maxsat problem is as follows~\textcolor{blue}{\cite{10.5555/1965254}}:

We are given a set of $n$ variables $\mathcal{X}=\{x_1,\dots,x_n\}$ and $m$ clauses $\mathcal{C} =\{c_1,\dots,c_m\}$ such that each clause has at most $3$ literals and each variable occurs in at most 8 clauses. The objective is to find a truth-assignment of the variables in $\mathcal{X}$ that maximizes the number of clauses in $\mathcal{C}$ satisfied. \maxsat\ is \apxh, and the \mds\ problem on circle graphs is also \apxh\ \textcolor{blue}{\cite{circle}}.

Consider an instance $\phi$ of \maxsat. Let $\text{SAT}(\phi)$ represent the maximum fraction of clauses in $\phi$ that can be satisfied. For a given graph $G$, let $\gamma(G)$ denote the cardinality of its \textsc{Minimum Dominating Set}. The paper~\textcolor{blue}{\cite{circle}} reduces an instance of \maxsat to a circle  
graph $G$ such that $\lvert V(G) \rvert = m + 56n + 4$ and states the following theorem:
\begin{theorem}\label{theoremcircle}
A polynomial-time reduction transforms an instance $\phi$ of \maxsat, consisting of $n$ variables and $m$ clauses, into a circle graph $G$ such that  
\[
\text{SAT}(\phi) = 1 \implies \gamma(G) \leq 16n + 2,
\]
\[
\text{SAT}(\phi) < \alpha \implies \gamma(G) > 16n + 2 + \frac{(1 - \alpha)m}{8}, \quad \text{for any } 0 < \alpha < 1.
\]
\end{theorem}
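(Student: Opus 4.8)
The plan is to exhibit a single carefully laid‑out family of chords of a circle that simulates $\phi$ through local gadgets, and to track $\gamma(G)$ exactly in the two regimes. First I would build, for each variable $x_i$, a \emph{variable gadget} of $56$ chords occupying a dedicated arc of the circle, designed to be \emph{rigid}: it admits exactly two minimum ``internal'' dominating configurations, each of size $16$, and these encode $x_i=\true$ and $x_i=\false$. Each of the two patterns also exposes a prescribed set of chords that protrude into a common ``clause region'' of the circle, so that a protruding chord is selected precisely when the corresponding literal of $x_i$ is made true. A constant number of auxiliary chords ($4$ of them, adding $2$ to the domination number) is added to connect the gadgets; this accounts for the $56n+4$ vertices and the additive $16n+2$. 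For each clause $c_j$ I would add one \emph{clause chord} $\gamma_j$ in the clause region crossing exactly the (at most three) protruding chords of the literals of $c_j$, so that $\gamma_j$ is dominated by the variable gadgets alone iff at least one literal of $c_j$ is satisfied by the induced assignment. The one genuinely geometric obligation is a routine ``layout'' check that all required crossings and non‑crossings can be realized simultaneously by chords of one circle: lay out the $n$ variable arcs in cyclic order and then thread the $\le 3m$ clause chords through the appropriate protruding endpoints.

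For completeness ($\text{SAT}(\phi)=1$): given a satisfying assignment, choose in each variable gadget the internal pattern matching its truth value ($16n$ chords total), add the $4$ auxiliary chords ($+2$), and note that every clause chord is crossed by some selected protruding chord since its clause is satisfied, while every chord inside a gadget is dominated by the internal pattern by construction. Hence $\gamma(G)\le 16n+2$.

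For soundness I would prove a \emph{normalization lemma}: any dominating set $D$ can be transformed, without increasing $\lvert D\rvert$, into one that contains the auxiliary chords, uses one of the two canonical internal patterns in each variable gadget, and otherwise contains only clause chords. Step (ii) is the crux — the gadget forces $D$ to contain at least $16$ chords meeting its arc, and rigidity is engineered precisely so that any $16$ such chords dominating the gadget can be swapped for a canonical pattern while preserving domination of every chord protruding into the clause region. After normalization $D$ induces a truth assignment; if $\text{SAT}(\phi)<\alpha$ then more than $(1-\alpha)m$ clauses are unsatisfied, so that many clause chords are undominated by the gadgets and must be covered by the remaining chords of $D$. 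Using the bounded‑occurrence structure of \maxsat\ (each variable, hence each protruding chord, lies in at most $8$ clauses) together with the spread‑out placement of the clause chords, each extra chord can be charged to at most $8$ newly dominated clause chords, giving $\gamma(G)>16n+2+\tfrac{(1-\alpha)m}{8}$.

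The main obstacle will be the rigidity of the variable gadget and the attendant normalization: one must design a chord configuration whose minimum internal dominating sets are \emph{exactly} the two intended patterns — with no same‑size ``mixed'' configuration slipping through — and then show that every globally near‑optimal dominating set restricts to such a pattern on each gadget, with the ``protruding chord selected iff literal true'' correspondence surviving the swap. Making the constants ($56$, $16$, $4$, $2$) consistent with a genuinely realizable circle layout is where the bulk of the technical effort lies; the vertex count and the two implications are then bookkeeping.
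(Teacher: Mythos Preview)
The paper does not prove this theorem at all. It is quoted verbatim from the cited reference \cite{circle} (the APX-hardness result for \textsc{Minimum Dominating Set} on circle graphs) and used as a black box; the paper's own contribution in Section~\ref{apxhard} is only the subsequent gadget that turns $G$ into the $2$-colored graph $H$ and Lemma~\ref{lemmaapx}. So there is nothing to compare your argument against here: the ``paper's proof'' of Theorem~\ref{theoremcircle} is simply a citation.

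As for your sketch itself, it is a reasonable outline of the kind of gap-preserving reduction one expects, and the arithmetic ($56n+4$ vertices, $16n+2$ baseline, the $1/8$ coming from the degree bound $8$) is consistent with the statement. But it is not yet a proof: everything hinges on actually exhibiting a $56$-chord variable gadget whose minimum internal dominating sets are \emph{exactly} two patterns of size $16$, realizable as chords of one circle together with all clause chords, and on proving the normalization lemma you describe. You correctly flag these as the main obstacles, and they are not minor---without the explicit gadget and a verified rigidity/normalization argument, the soundness direction (in particular the claim that any near-optimal $D$ can be swapped to a canonical per-gadget pattern without losing domination of protruding chords) is an assertion, not a proof. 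If you want to make this self-contained you would need to reproduce or adapt the construction from \cite{circle}; for the purposes of the present paper, citing that result is all that is required.
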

We now reduce the graph $G$ into a graph $H$ as follows:
\begin{figure}[ht]
\includegraphics[width=9cm]{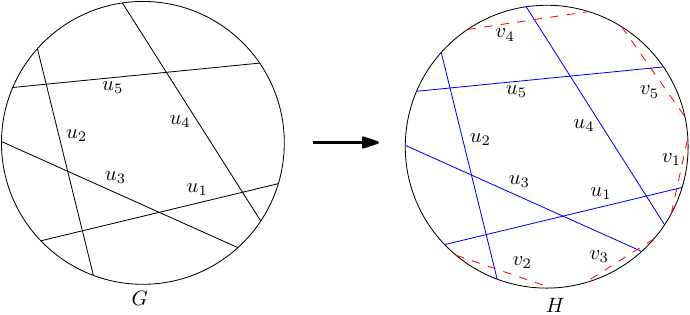}
\centering
\caption{Colors: \emph{blue $\equiv$ normal chord}, and \emph{red $\equiv$ dotted chord}. An example of the reduction: each chord of \( G \) is colored \textit{blue} in \( H \). Each \textit{red} chord in \( H \) is adjacent only to a chord of \textit{blue} color.}\label{bubai3} 
\end{figure}

\textbf{Reduction.} Define $V(H) = V(G) \cup V(X)$ and $E(H) = E(G) \cup E(X)$, where $X$ is a subgraph with vertex set $V(X)$ and edge set $E(X)$ are constructed as follows (see Figure~\textcolor{blue}{\ref{bubai3}}). Initially, set $V(X) := \emptyset$ and $E(X) := \emptyset$. Assign color $0$ to all chords in $V(G)$, i.e., $C(V(G)) = \{0\}$. For each chord $u_i \in V(G)$, introduce a corresponding chord $v_i \in V(X)$ such that $C(v_i) = 1$. Position $v_i$ so that it intersects only $u_i$ (meaning $v_i$ has degree 1 in $H$). These newly created edges are then included in $E(X)$. As a result, we obtain $\lvert V(H) \rvert = 2m+112n+8$, $\lvert V(X)\rvert =m+56n+4$ and $\lvert C(V(H)) \rvert = 2$.
\begin{lemma}{\label{lemmaapx}}$^{\ast}$
$G$ has a dominating set of size $t$ if and only if $H$ has a selective subset of size $m+56n+4+t$.
\end{lemma}
Let \( \lvert S^{min}(H)\rvert \) denote the cardinality of the minimum selective subset of \( H \). We now show the ``gap-preserving'' reduction from \maxsat to the graph \( H \) in the following theorem:
\begin{theorem}\label{bubailast}
The \mss\ problem is \apxh\ in circle graphs.
\end{theorem}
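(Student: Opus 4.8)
The plan is to chain the gap-preserving reduction of Theorem~\ref{theoremcircle} with the equivalence of Lemma~\ref{lemmaapx}, translating the two-sided bound on $\gamma(G)$ into a two-sided bound on $\lvert S^{min}(H)\rvert$. First I would record that, by Lemma~\ref{lemmaapx}, the minimum selective subset of $H$ has size exactly $\lvert S^{min}(H)\rvert = (m+56n+4) + \gamma(G)$: every dominating set $D$ of $G$ gives the selective subset $D \cup V(X)$, and every selective subset contains all of $V(X)$ (each chord of $X$ is a singleton block, so Lemma~\ref{lemma2} forces it in) and its remainder is a dominating set of $G$. Hence the additive shift $\Delta := m+56n+4$ is the same in both implications.

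Next I would push the two implications of Theorem~\ref{theoremcircle} through this identity. In the completeness case, $\text{SAT}(\phi)=1$ gives $\gamma(G)\le 16n+2$, hence $\lvert S^{min}(H)\rvert \le \Delta + 16n+2 = m+72n+6$. In the soundness case, $\text{SAT}(\phi)<\alpha$ gives $\gamma(G) > 16n+2+\tfrac{(1-\alpha)m}{8}$, hence $\lvert S^{min}(H)\rvert > m+72n+6 + \tfrac{(1-\alpha)m}{8}$. Now I would argue that this is a genuine multiplicative gap: since each variable occurs in at most $8$ clauses and each clause has at most $3$ literals, we have $m \le \tfrac{8n}{1} \cdot \tfrac{1}{1}$ up to constants — more precisely $3m \le 8n$ is false in general, but the standard bound for \maxsat$(8)$ is $n = \order{m}$, so $72n+6 = \order{m}$ and the ``yes'' value $m+72n+6$ is $\Theta(m)$, while the additive separation $\tfrac{(1-\alpha)m}{8}$ is also $\Theta(m)$ for fixed $\alpha<1$. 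Therefore the ratio between the soundness and completeness thresholds is at least $1 + c(1-\alpha)$ for some absolute constant $c>0$, which is bounded away from $1$.

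From this gap I would conclude \apx-hardness in the standard way: since \maxsat$(8)$ is \apxh, there is a constant $\alpha_0<1$ such that it is \np-hard to distinguish $\text{SAT}(\phi)=1$ from $\text{SAT}(\phi)<\alpha_0$; composing with the reduction $\phi\mapsto H$, a polynomial-time $(1+c(1-\alpha_0))$-approximation for \mss on circle graphs would decide this gap, contradicting $\npp\ne\np$. I would also note $H$ is a circle graph (adding a degree-$1$ chord intersecting only $u_i$ keeps it a chord diagram) and the reduction is polynomial since $\lvert V(H)\rvert = 2m+112n+8$.

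The main obstacle is the bookkeeping that the shift $\Delta$ is small relative to the gap — i.e.\ confirming $\Delta = \order{m}$ so that the additive gap $\tfrac{(1-\alpha)m}{8}$ survives as a multiplicative gap after dividing by $\lvert S^{min}(H)\rvert = \Theta(m)$. This is where the restriction to \maxsat$(8)$ (bounded occurrences, giving $n=\order{m}$) is essential; without it the additive gap could be swamped. Everything else is routine: the $\Leftrightarrow$ of Lemma~\ref{lemmaapx} is already proved, and the gap-preserving template for turning an additive gap with a linear shift into an \apx-hardness statement is standard.
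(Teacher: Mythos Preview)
Your proposal is correct and follows essentially the same approach as the paper: compose Lemma~\ref{lemmaapx} (which gives $\lvert S^{\min}(H)\rvert = \gamma(G) + (m+56n+4)$) with the gap of Theorem~\ref{theoremcircle} to obtain a gap for $\lvert S^{\min}(H)\rvert$. You actually go further than the paper by verifying that the additive gap $\tfrac{(1-\alpha)m}{8}$ survives as a multiplicative one (using $n=\order{m}$ from the bounded-occurrence property of \maxsat), a point the paper glosses over with ``this clearly indicates that the gap-preserving reduction works.''
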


\begin{remark}
Our \apxh result holds not only for circle graphs with two colors; indeed, any color (other than blue) can replace the red chords adjacent to the blue ones.
\end{remark} 
\section{Conclusion}
It is still open regarding whether \mss\ problem is \npc\ on unit disk graphs when \( c \) is constant. This raises the open question of whether \fpt\ is possible when the number of colors \( c \) is treated as a parameter. Another direction to explore is whether \fpt\ is possible when the number of blocks \( k \) is treated as a parameter. Additionally, since \mss is \apxh on circle graphs, it is natural to ask whether a constant-factor approximation, or a \((2 + \epsilon)\)-approximation is possible for circle graphs, in contrast to a \((1+\epsilon)\)-approximation. \fpt for \mss problem with respect to parameters \( c \) or \( k \) also remains an open question in the context of circle graphs.

Moreover, exploring the complexity of \mss{} on additional graph classes—such as circular-arc graphs, chordal graphs, and permutation graphs—could uncover new tractable cases or reveal deeper structural insights. These graph families often arise in scheduling, bioinformatics, and network analysis, and understanding \mss{} within these domains may have practical implications as well.
\section*{Acknowledgments}
I sincerely thank Bodhayan Roy and Aritra Banik for their support and guidance. Also, the motivation for this work arose from discussions on the \mcs paper with Aritra Banik, facilitated by the ACM India Anveshan Setu Fellowship program in 2024. I also thank the anonymous reviewers for their thoughtful comments and suggestions, which helped enhance the quality and presentation of the paper.
\bibliographystyle{unsrt} 
\bibliography{references}  
\newpage
\appendix
\section{Proof of the Lemma~\ref{lemma2}}
\begin{proof}
Suppose $\n [v,B_{i,3}]\cap S=\emptyset$. Let $v\in V_{\ell}$ for some $1\leq \ell \leq c$. Since each vertex of $B_{i,1}$ must have at least one adjacent vertex of a different color, there must exist a vertex $w\notin B_{i}$ such that $\dist (v,w)=1$ and $C(w)\neq C(v)$. If no vertex of \( \n[v, B_{i,3}] \) is in \( S \), then \( v \) has \( w \) as its only nearest neighbor in \( S \cup (V(G) \setminus V_{\ell}) \), and \( C(v) \neq C(w) \), contradicting the assumption that \( S \) is a selective subset.\qed
\end{proof}
\section{Proof of the Lemma~\ref{bubailemma6}}
\begin{proof}
Let $D$ be a \textsc{Dominating Set} of $G$ of size $s$. Define $S = D \cup \{z\}$; then $\lvert S\rvert = s+1$. We claim that $S$ is a selective subset of $G'$. If not, then there exists a vertex $u\in V(G)\setminus S$ such that $\dist (u, S)> \dist(u,z)$ which implies that $D$ is not a \textsc{Dominating Set} of $G$.

Conversely, suppose $S$ is a selective subset of size $s+1$ in the graph $G'$. Due to Lemma~\textcolor{blue}{\ref{lemma2}}, $z\in S$. Let $D=S\setminus \{z\}$. We claim that $D$ is a \textsc{Dominating Set} of $G$. If not, then there exists a vertex $u\in V(G)$ such that \( \n [u, V(G)] \cap D = \emptyset \), which would imply $\dist (u,S) > \dist (u,z)$, contradicting the fact that $S$ is a selective subset.
Hence our lemma is proved.\qed
\end{proof}
\section{Proof of the Theorem~\ref{th1}}
\begin{proof}
We recall that the \textsc{Minimum Dominating Set} problem is known to be $\apxh$ on general graphs; that is, there exists a constant $\delta > 0$ such that no polynomial-time algorithm can approximate the solution within a factor of $\delta \cdot \log n$ unless $\npp = \np$. 

By Lemma~\textcolor{blue}{\ref{bubailemma6}}, there is an approximation-preserving reduction from the \textsc{Minimum Dominating Set} problem to the \mss problem. In particular, given any instance of \textsc{Minimum Dominating Set}, one can construct in polynomial time an instance of \mss such that any approximation algorithm for \mss would translate to an approximation algorithm of comparable quality for \textsc{Minimum Dominating Set}. 

Therefore, since \textsc{Minimum Dominating Set} is $\apxh$ on general graphs, it follows immediately from the reduction that the \mss problem inherits the same hardness of approximation. Hence, the \mss problem is $\lapxh$ on general graphs.\qed
\end{proof}
\section{Proof of the Lemma~\ref{npclemma}}
\begin{proof}
Let $D$ be a dominating set of $U$ of size $t$. Each vertex in $V(X)$ forms a block, since each vertex has a unique color. By Lemma~\textcolor{blue}{\ref{lemma2}}, every selective subset $S$ must include all vertices in $V(X)$, i.e., $V(X) \subseteq S$. Therefore, to construct a selective subset of $U'$, we take all vertices in $V(X)$ along with the corresponding vertices of $D$. Let $S = V(X) \cup D$. Since every vertex in $V(U)$ is either in $D$ or adjacent to a vertex in $D$, and since each vertex in $V(X)$ is already in $S$, we conclude that $S$ is a selective subset of size $nm+t$. This holds because for each vertex $v \in V(U')\setminus S$, $\dist(v,S) = 1 = \dist(v,V(X))$.

Conversely, suppose $S$ is a selective subset of $U'$ of size $nm+t$. By Lemma~\textcolor{blue}{\ref{lemma2}}, $S$ must contain all vertices of $V(X)$. Define \( D' = S \setminus V(X) \). We claim that \( D' \) is a dominating set of $U$ with \( \lvert D' \rvert = t \). For every vertex $u \in V(U)$, $S$ ensures that either $u \in S$ or $u$ has an adjacent vertex in $S$ of the same color. This holds since $\dist (u,V(X))=1$ and $C(u)\notin C(V(X))$. Therefore, the corresponding vertices in $D'$ must satisfy the same conditions. Thus, $D'$ is a dominating set of $U$ of size $t$.\qed
\end{proof}
\section{Proof of the Theorem~\ref{th2}}
\begin{proof}
    It is straightforward to see that \mss belongs to \np, since a given solution can be verified in polynomial time. By Lemma~\textcolor{blue}{\ref{npclemma}}, we established a reduction from the $\npc$ \mds problem in \udg~\textcolor{blue}{\cite{Clark1990}}. Hence, \mss is $\npc$ on unit disk graphs.\qed
\end{proof}
\section{Proof of the Lemma~\ref{lemma3}}
\begin{proof}
Suppose that $v\in B_i\setminus B_{i,3}$ and $v\notin S^{min}$. Let $v\in V_{\ell}$ and $w\in \NN (v,B_{i,1})$. By Lemma~\textcolor{blue}{\ref{lemma2}}, we have $\n [w,B_{i,3}]\cap S^{min}\neq \emptyset$. Since $v\notin B_{i,3}$, $\n [w,B_{i,3}]\cap (S^{min}\setminus \{v\})\neq \emptyset$. Therefore for any $j\neq i$, $\dist (v,B_j) \geq \dist (v,w)+\dist \big(w, \n (w,B_{i,3})\big)$.
    
This implies that $v$ must have a nearest vertex in $S^{min}\setminus \{v\}\cup (V(G)\setminus V_{\ell})$ with the same color as $C(v)$. Hence, no vertex from $B_{i}\setminus B_{i,3}$ is in $S^{min}$.

Since no vertex \( v \in B_{i} \setminus B_{i,3} \) is in \( S^{min} \), it follows that \( S^{min} \subseteq \bigcup_{i=1}^{c}B_{i,3} \).\qed
\end{proof}
\section{Proof of the Theorem~\ref{th3}}
\begin{proof}
Let $C(B_i) = \ell$. By Lemma~\textcolor{blue}{\ref{lemma3}}, every minimum selective subset is contained in $\bigcup_{i=1}^{k} B_{i,3}$.  
Hence, it remains to show that every vertex in $B_{i,3} \setminus S(B_{i,1})$ has a nearest neighbor of its own color in $S(B_{i,1}) \cup (V(G)\setminus V_\ell)$.

Consider a vertex $w \in B_{i,3} \setminus S(B_{i,1})$. We distinguish two cases:

\textbf{Case 1.} $w \in B_{i,1}$. By Lemma~\textcolor{blue}{\ref{lemma2}}, at least one adjacent vertex of $w$ in $B_{i,3}$ must belong to $S(B_{i,1})$. So, $w$ has a nearest neighbor in $S(B_{i,1})\cup (V(G)\setminus V_{\ell})$ of its own color.

\textbf{Case 2.} $w \in B_{i,2}$. Let $u \in B_{i,1}$ be an adjacent vertex of $w$. Since $w \notin S(B_{i,1})$ and $u\in B_{i,1}$, by Lemma~\textcolor{blue}{\ref{lemma2}}, it follows that $(\n[u,B_{i,3}] \setminus \{w\}) \cap S(B_{i,1}) \neq \emptyset$. Because $\dist(w,u) = 1$, we have $\dist(w,B_j) \;\geq\; \dist(w,u) + \dist\big(u, (\n(u,B_{i,3}) \setminus \{w\})\big)$
   for all $j \neq i$.  
   Thus, $w$ has a nearest neighbor of its own color in $(\n[u,B_{i,3}] \setminus \{w\}) \cap S(B_{i,1})$.

Therefore, $S(B_{i,1})$ is also a selective subset for $B_i$. Consequently, constructing $S = \bigcup_{i=1}^{k} S(B_{i,1})$
with size at most a $(1+\epsilon)$-factor of the global optimum is sufficient.\qed
\end{proof}
\section{Proof of the Lemma~\ref{lemma4}}
\begin{proof}
Since $\dist (D_{i,1}^j, D_{i,1}^l)>2$, we have $\dist (\n [D_{i,1}^j,$ $B_{i,3}], $ $ \n [D_{i,1}^l,$ $B_{i,3}])\geq 1$, which implies that their neighborhoods are disjoint. Hence, the first claim holds.

Since $S^{min}(D_{i,1}^j)$ $\subseteq \n [D_{i,1}^j,B_{i,3}]$ and $S^{min}(D_{i,1}^l)$ $\subseteq \n [D_{i,1}^l,$ $B_{i,3}]$, the second claim follows immediately.

Finally, since $S^{min}(D_{i,1}^j) \cap S^{min}(D_{i,1}^l)= \emptyset$, the third claim also holds. \qed 
\end{proof}
\section{Proof of the Lemma~\ref{manna1}}
\begin{proof}
From Lemma~\textcolor{blue}{\ref{lemma3}}, every vertex in \( S^{\min}\) lies within the neighborhood \( \bigcup_{i=1}^{k} \n[B_{i,1}, B_{i,3}] \). Lemma~\textcolor{blue}{\ref{lemma2}} further states that for each vertex \( v \in B_{i,1} \), either \( v \in S^{\min} \), or at least one of its adjacent vertices \( u \in B_{i,3} \) must be included in \( S^{\min} \).

Moreover, by the proof of Theorem~\textcolor{blue}{\ref{th3}}, for any vertex \( x \in \bigcup_{i=1}^{k} \n[B_{i,1}, B_{i,3}] \setminus S^{\min} \), if \( x \in V_{\ell} \), then there exists a nearest vertex \( y \in S^{\min} \cup (V(G) \setminus V_{\ell}) \) such that \( C(x) = C(y) \). Therefore, $x$ satisfies the definition of the selective subset problem.

Therefore, for every 2-distance subset \( D_{i,1}^{j} \), the set \( S^{\min} \cap \n[D_{i,1}^{j}, B_{i,3}] \) forms a local selective subset for \( D_{i,1}^{j} \), since every vertex \( v \in D_{i,1}^{j}\subseteq B_{i,1} \) is either in the set \( S^{\min} \cap \n[D_{i,1}^{j}, B_{i,3}] \) or has at least one adjacent vertex in \( \n[D_{i,1}^{j}, B_{i,3}] \).\qed
\end{proof}
\section{Proof of the Lemma~\ref{lemma5}}
\begin{proof}
By Lemma~\textcolor{blue}{\ref{lemma4}}, since the neighborhoods of any two 2-distance subsets are disjoint, the corresponding local minimum selective subsets are disjoint. Therefore, each $S^{min}(D_{i,1}^j)$ contributes uniquely to $S^{min}$.

Moreover, by Lemma~\textcolor{blue}{\ref{manna1}}, $S^{min}\cap \n [D_{i,1}^j, B_{i,3}]$ must be a local selective subset of $D_{i,1}^j$. Also, $S^{min} (D_{i,1}^j)$ is the minimum selective subset for $D_{i,1}^j$. Therefore, 
$$\lvert S^{min} (D_{i,1}^j) \rvert\leq \lvert S^{min}\cap \n [D_{i,1}^j, B_{i,3}]\rvert.$$
Summing over all $i$ and $j$, we get:
$$ \sum_{i=1}^{k}\sum_{j=1}^{t_i}\lvert S^{min}(D_{i,1}^{j})\rvert\leq \sum_{i=1}^{k}\sum_{j=1}^{t_i}\lvert S^{min}\cap \n [D_{i,1}^j, B_{i,3}]\rvert \leq \lvert S^{min}\rvert.$$\qed
\end{proof}
\section{Algorithm~\ref{alg:01}}
\begin{algorithm}[H]\label{alg:01}
\caption{Algorithm for Computing $D_{i,1}^{j}$ and $E_{i,1}^{j}$}
\KwIn{Graph \( G = (V(G), E(G)) \), initial sets \( B_{i,1}, B_{i,2}, B_{i,3} \) for $1\leq i\leq k$, and parameter \( \delta \).}
\KwOut{Sets \(D_{i,1}^{j}\) and \(E_{i,1}^{j}\) for all \(i,j\).}

Initialize \( i \leftarrow 1 \)\;

\While{\( i \leq k \)}{
    Initialize \( j \leftarrow 1 \)\;
    Set \( B_{i,1}^{j} \leftarrow B_{i,1} \), \( B_{i,2}^{j} \leftarrow B_{i,2} \), \( B_{i,3}^{j} \leftarrow B_{i,3} \)\;

    \While{\( B_{i,3}^{j} \neq \emptyset \)}{
        Select an arbitrary vertex \( v_i^{j} \in B_{i,1}^{j} \)\;

        \For{\( r = 0,1,2,\dots \)}{
            Compute \( \n _{i,j}^{r} [v_{i}^{j}, B_{i,3}^{j}] \)\;
            Compute \( S^{min}(X_{i,j}^{r+2}) \)\;
            
            \If{\( \lvert S^{min}(X_{i,j}^{r+2})\rvert \leq \delta \cdot \lvert S^{min}(X_{i,j}^{r})\rvert \)}{
                Break\;
            }
        }

        Set \( \overline{r_{i,j}} \) as the smallest \( r \) for which the condition is violated\;

        Update sets:\;
        \( D_{i,1}^{j} \leftarrow X_{i,j}^{\overline{r_{i,j}}} \)\;
        \( E_{i,1}^{j} \leftarrow X_{i,j}^{\overline{r_{i,j}}+2} \)\;
        \Return{Selective subset for \( D_{i,j}, E_{i,j} \)}
        \( B_{i,3}^{j+1} \leftarrow B_{i,3}^{j} \setminus \n _{i,j}^{\overline{r_{i,j}}+2} [v_{i}^{j}, B_{i,3}^{j}] \)\;
        \( B_{i,1}^{j+1} \leftarrow B_{i,1}^{j} \setminus X _{i,j}^{\overline{r_{i,j}}+2} \)\;
        \( B_{i,2}^{j+1} \leftarrow B_{i,2}^{j} \setminus Y _{i,j}^{\overline{r_{i,j}}+2} \)\;
         
        Increment \( j \leftarrow j+1 \)\;
    }

    Increment \( i \leftarrow i+1 \)\;
}
\end{algorithm}
\section{Proof of the Lemma~\ref{lemma7}}
\begin{proof}
We prove this by mathematical induction on \( j = 1, \dots, t_i \) for each \( i \). 

For the base case \( j=1 \), we have 
\[
B_{i,3}^{2} = B_{i,3}^{1} \setminus \n_{i,1}^{\overline{r_{i,j}}+2}[v_i^{1}, B_{i,3}^{1}].
\]
Since 
\[
\n [\n [D_{i,1}^{1}, B_{i,3}^{1}]] = \n [\n [\n_{i,1}^{\overline{r_{i,j}}}[v_i^{1}, B_{i,3}^{1}]]] = \n_{i,1}^{\overline{r_{i,j}}+2}[v_i^{1}, B_{i,3}^{1}],
\]
it follows that \( \{D_{i,1}^{1}, B_{i,1}^{2}\} \) form a 2-distance subset because \( \dist (D_{i,1}^{1}, B_{i,1}^{2}) > 2 \) and \( D_{i,1}^{1}, B_{i,1}^{2} \subseteq B_{i,1} ^{1}\).

Now, as the induction hypothesis, assume that \( \{D_{i,1}^{1} $ $,\dots,$ $ D_{i,1}^{j-1}, $ $ B_{i,1}^{j}\} \) forms a 2-distance subset. We need to show that \( \{D_{i,1}^{1},\dots, D_{i,1}^{j}, B_{i,1}^{j+1}\} \) is also a 2-distance subset. 

Since 
\[
B_{i,3}^{j+1} =B_{i,3}^{j} \setminus \n_{i,1}^{\overline{r_{i,j}}+2}[v_i^{j}, B_{i,3}^{j}]= B_{i,3}^{j} \setminus \n [\n [D_{i,1}^{j}, B_{i,3}^{j}]],
\]
this completes the proof.\qed
\end{proof}
\section{Proof of the Theorem~\ref{th4}}
\begin{proof}
By summing over all indices, we obtain:
\begin{equation*}
\begin{split}
\big\lvert \bigcup_{i=1}^{k} \bigcup_{j=1}^{t_i} S^{min}(E_{i,1}^j) \big\rvert 
&\le \sum_{i=1}^{k}\sum_{j=1}^{t_i} \lvert S^{min}(E_{i,1}^j) \rvert \\
&\le \delta \sum_{i=1}^{k}\sum_{j=1}^{t_i} \lvert S^{min}(D_{i,1}^j) \rvert \\
&\le \delta \lvert S^{min} \rvert,
\end{split}
\end{equation*}
where the final step follows from Lemma~\textcolor{blue}{\ref{lemma5}}.\qed
\end{proof}
\section{Proof of the Lemma~\ref{lemma8}}
\begin{proof}
Since \( B_{i,1}^{j+1} = B_{i,1}^{j} \setminus E_{i,1}^{j} \), where \( E_{i,1}^{j} = X_{i,j}^{\overline{r_{i,j}}+2} \) and \( E_{i,1}^{j} \subset B_{i,1}^{j} \), we have $B_{i,1}^{j} = B_{i,1}^{j+1} \cup E_{i,1}^{j}.$ This implies that $\bigcup_{j=1}^{t_i}E_{i,1}^{j}=B_{i,1}$.

Thus, \( \bigcup_{j=1}^{t_i} S^{min}(E_{i,1}^{j}) \) is a selective subset of \( B_{i,1} \) for \( 1\leq i\leq k \). Consequently, the set 
\[
\bigcup_{i=1}^{k}\bigcup_{j=1}^{t_i} S^{min}(E_{i,1}^{j})
\]
forms a selective subset of \( G \).\qed
\end{proof}
\section{Proof of the Lemma~\ref{lemma10}}
%
%
\begin{proof}
According to Lemma~\ref{lemma2}, for each vertex \( v \in E_{i,1}^{j} \subseteq F_{i,j} \), either \( v \in S^{min}(E_{i,1}^{j}) \) or at least one of its neighbors in \( F_{i,j} \) must be in \( S^{min}(E_{i,1}^{j}) \). However, vertices in \( F_{i,j}\setminus S^{min}(E_{i,1}^{j}) \) may not have any adjacent vertex in \( S^{min}(E_{i,1}^{j}) \), as shown in the proof of Theorem~\ref{th3}. Therefore, \( S^{min}(E_{i,1}^{j}) \) may not itself be a dominating set of \( F_{i,j} \), but its size is at most that of a minimum dominating set of \( F_{i,j} \).

Moreover, by the proof of Theorem~\ref{th3}, if \(u \in F_{i,j}\setminus S^{min}(E_{i,1}^{j})\) and $C(u)=\ell$, then \(u\) has a nearest neighbor of the same color in \(S^{min}(E_{i,1}^{j}) \cup (V(G) \setminus V_{\ell})\). That is, every vertex in \(F_{i,j}\setminus S^{min}(E_{i,1}^{j})\) satisfies the condition of a selective subset.

Here, \(\gamma(F_{i,j})\) denotes the size of a minimum dominating set of \(F_{i,j}\), 
and \(\alpha(F_{i,j})\) denotes the size of a maximum independent set of \(F_{i,j}\).

Thus, we have
\[
|S^{min}(E_{i,1}^{j})| \leq \gamma(F_{i,j}).
\]
Since every maximum independent set in \( F_{i,j} \) is a dominating set, it follows that
\[
\gamma(F_{i,j}) \leq \alpha(F_{i,j}).
\]
Consequently,
\[
|S^{min}(E_{i,1}^{j})| \leq \alpha(F_{i,j}),
\]
as claimed.
\end{proof}

\section{Proof of the Lemma~\ref{lemma11}}
\begin{proof}
By the definition of unit disk graphs, we have:
\begin{equation}
  u,v \in \n_{i,j} ^{r}[v_i^{j}, B_{i,3}^{j}] \quad \iff \quad \| f(u) - f(v) \| \leq 2r.
\end{equation}
Thus, \( I^{r} \) consists of mutually disjoint unit disks within a disk of radius \( 2r+1 \) centered at \( f(v) \). 

Since the maximum number of such disjoint unit disks that can fit in a disk of radius \( 2r+1 \) is given by:
\[
\lvert I^{r} \rvert \leq \frac{\pi (2r+1)^2}{\pi} = (2r+1)^2 = \mathcal{O}(r^2),
\]
the result follows.\qed
\end{proof}
\section{Proof of the Lemma~\ref{lemma20}}
\begin{proof}
For any \( r < \overline{r_{i,j}} \), condition (\textcolor{blue}{\ref{eq1}}) implies the following inequalities depending on whether \( r \) is even or odd:

When \( r \) is even:
\begin{equation}
\begin{split}
(2r+1)^2 \geq \lvert S^{min}(X_{i,j}^{r+2})\rvert > \delta \lvert S^{min}(X_{i,j}^{r})\rvert  \\
> \dots > \delta ^{\frac{r}{2}} \lvert S^{min}(X_{i,j}^{0})\rvert = \delta^{\frac{r}{2}}.
\end{split}
\end{equation}

When \( r \) is odd:
\begin{equation}
\begin{aligned}
(2r+1)^2 &\geq \lvert S^{min}(X_{i,j}^{r+2})\rvert > \delta \lvert S^{min}(X_{i,j}^{r})\rvert  \\
&> \dots > \delta ^{\frac{r}{2}} \lvert S^{min}(X_{i,j}^{1})\rvert = \delta^{\frac{r}{2}}.
\end{aligned}
\end{equation}
Since \( \delta > 1 \), the above inequalities hold until condition (\textcolor{blue}{\ref{eq1}}) becomes violated. Therefore, when \( r < \overline{r_{i,j}} \) the above inequalities are violated (that is, $(2\overline{r_{i,j}}+1)^{2} <  \delta^{\frac{\overline{r_{i,j}}}{2}}$), and this first violation determines the value of \(\overline{r_{i,j}} \), which depends only on \( \delta \), not on the size of $V(U)$. To prove this, we now show that $(2d+1)^2 < (1+\epsilon)^d$ where $d=\mathcal{O}\Big(\frac{1}{\epsilon^2} \log \frac{1}{\epsilon}\Big)$. Note that we use $d$ instead of $\overline{r_{i,j}}$ for simplicity of calculation.

For any \( d \geq 1 \), we first prove that
\[
(2d+1)^2 < (3d)^2 < (1+\epsilon)^d.
\]
The first inequality holds for all \( d \geq 1 \). To prove the second inequality, it suffices to show:
\[
2\log (3d) < d\log (1+\epsilon).
\]
This implies:
\[
\frac{2}{d} \log (3d) + \epsilon^2 < \log (1+\epsilon) + \epsilon^2.
\]

Substituting \( d = \frac{1}{\epsilon^2} \log \frac{1}{\epsilon} \) into $\frac{2}{d} \log (3d) + \epsilon^2$, we obtain:
\begin{equation}\label{eq6}
\begin{aligned}
\frac{2}{d}(\log 3+ \log d)+\epsilon^2 &= \frac{2\epsilon^2}{\log \frac{1}{\epsilon}}
\left(\log 3+\log \frac{1}{\epsilon^2} + \log\log\frac{1}{\epsilon}\right) + \epsilon^2  \\
&= 2\epsilon^2 \left( \frac{\log 3}{\log \frac{1}{\epsilon}}
+ 2\frac{\log \frac{1}{\epsilon}}{\log \frac{1}{\epsilon}}
+ \frac{\log \log \frac{1}{\epsilon}}{\log \frac{1}{\epsilon}}
+ \frac{1}{2} \right) \\
&< 2\epsilon^2 (\log 3+3.5).
\end{aligned}
\end{equation}
Since \( 0 < \epsilon < \frac{1}{10} \), it follows that:
\begin{equation}\label{eq7}
2\epsilon^2 (\log 3 + 3.5) < 10\epsilon^2 < \epsilon,
\end{equation}
As established in~\textcolor{blue}{\cite{GrahamKnuthPatashnik1998}}, for \( 0 \leq \epsilon < \frac{1}{2} \), the following inequality holds
\begin{equation}\label{eq8}
    \epsilon \leq \log (1+\epsilon) + \epsilon^2
\end{equation}
Substituting inequalities (\textcolor{blue}{\ref{eq7}}) and (\textcolor{blue}{\ref{eq8}}) into inequality (\textcolor{blue}{\ref{eq6}}), we obtain
\[
\frac{2}{d} \log (3d) + \epsilon^2 < \log (1+\epsilon) + \epsilon^2.
\]
Moreover, by Theorem~\textcolor{blue}{\ref{th6}}, \( S^{min}(E_{i,1}^{j}) \) can be computed in \( \mathcal{O}(n^{d^2}) \), where \( d(\epsilon) = \mathcal{O}(\frac{1}{\epsilon^2} \log \frac{1}{\epsilon}) \) and \( 0 < \epsilon < \frac{1}{10} \).\qed
\end{proof}

\section{Proof of the Lemma~\ref{lemmaapx}}
\begin{proof}
If $D$ is a dominating set of $G$ of size $t$, then $S = D \cup V(X)$ forms a selective subset of $H$. This holds because each chord in $V(X)$ forms a block, and by Lemma~\textcolor{blue}{\ref{lemma2}}, we have $V(X) \subset S$. Moreover, for every vertex $v \in V(H) \setminus S$, we have $\dist (v, S) = \dist (v, V(X)) = 1$.

Conversely, if $S$ is a selective subset of size $m+56n+4+t$, then by Lemma~\textcolor{blue}{\ref{bubai1}}, we have $V(X) \subset S$, which implies that $\lvert S \setminus V(X) \rvert = t$. The set $S \setminus V(X)$ forms a dominating set of $G$, since for each vertex $v \in  V(G)$, we have $\dist (v, V(X)) = 1$ in $H$. Therefore, either $v\in S \setminus V(X)$ or at least one neighbor of $v$ from $V(G)$ must be included in $S \setminus V(X)$. This completes the proof.\qed
\end{proof}
\section{Proof of the Theorem~\ref{bubailast}}
\begin{proof}
Using Lemma~\textcolor{blue}{\ref{lemmaapx}}, we have \( S^{min}(H) = m + 56n + 4 + \gamma(G) \). 
We use this equation in Theorem~\textcolor{blue}{\ref{theoremcircle}} for any $0 < \alpha < 1$, and we get 
\[
\text{SAT}(\phi) = 1 \implies \lvert S^{min}(H)\rvert \leq 16n + 2 + (m + 56n + 4),
\]
\[
\text{SAT}(\phi) < \alpha \implies \lvert S^{min}(H)\rvert > 16n + 2 + (m + 56n + 4) + \frac{(1 - \alpha)m}{8}.
\]
This clearly indicates that the gap-preserving reduction works. Therefore, this reduction shows that \mss is \apxh in circle graphs. Hence, the selective subset problem does not admit a \ptas\ unless \( \npp = \np \).\qed
\end{proof}
\end{document}